\begin{document}

\newcommand{\hide}[1]{}
\newcommand{\tbox}[1]{\mbox{\tiny #1}}
\newcommand{\half}{\mbox{\small $\frac{1}{2}$}}
\newcommand{\sinc}{\mbox{sinc}}
\newcommand{\const}{\mbox{const}}
\newcommand{\trc}{\mbox{trace}}
\newcommand{\intt}{\int\!\!\!\!\int }
\newcommand{\ointt}{\int\!\!\!\!\int\!\!\!\!\!\circ\ }
\newcommand{\eexp}{\mbox{e}^}
\newcommand{\bra}{\left\langle}
\newcommand{\ket}{\right\rangle}
\newcommand{\EPS} {\mbox{\LARGE $\epsilon$}}
\newcommand{\ar}{\mathsf r}
\newcommand{\im}{\mbox{Im}}
\newcommand{\re}{\mbox{Re}}
\newcommand{\bmsf}[1]{\bm{\mathsf{#1}}}
\newcommand{\mpg}[2][1.0\hsize]{\begin{minipage}[b]{#1}{#2}\end{minipage}}

\newcommand{\CC}{\mathbb{C}}
\newcommand{\NN}{\mathbb{N}}
\newcommand{\PP}{\mathbb{P}}
\newcommand{\RR}{\mathbb{R}}
\newcommand{\QQ}{\mathbb{Q}}
\newcommand{\ZZ}{\mathbb{Z}}

\renewcommand{\a}{\alpha}
\renewcommand{\b}{\beta}
\renewcommand{\d}{\delta}
\newcommand{\D}{\Delta}
\newcommand{\g}{\gamma}
\newcommand{\G}{\Gamma}
\renewcommand{\th}{\theta}
\renewcommand{\l}{\lambda}
\renewcommand{\L}{\Lambda}
\renewcommand{\O}{\Omega}
\newcommand{\s}{\sigma}

\newtheorem{theorem}{Theorem}
\newtheorem{acknowledgement}[theorem]{Acknowledgement}
\newtheorem{algorithm}[theorem]{Algorithm}
\newtheorem{axiom}[theorem]{Axiom}
\newtheorem{claim}[theorem]{Claim}
\newtheorem{conclusion}[theorem]{Conclusion}
\newtheorem{condition}[theorem]{Condition}
\newtheorem{conjecture}[theorem]{Conjecture}
\newtheorem{corollary}[theorem]{Corollary}
\newtheorem{criterion}[theorem]{Criterion}
\newtheorem{definition}[theorem]{Definition}
\newtheorem{example}[theorem]{Example}
\newtheorem{exercise}[theorem]{Exercise}
\newtheorem{lemma}[theorem]{Lemma}
\newtheorem{notation}[theorem]{Notation}
\newtheorem{problem}[theorem]{Problem}
\newtheorem{proposition}[theorem]{Proposition}
\newtheorem{remark}[theorem]{Remark}
\newtheorem{solution}[theorem]{Solution}
\newtheorem{summary}[theorem]{Summary}
\newenvironment{proof}[1][Proof]{\noindent\textbf{#1.} }{\ \rule{0.5em}{0.5em}}


\journal{Applied Mathematics and Computation}

\begin{frontmatter}



\title{Computational and analytical studies of the Randi\'c index in Erd\"os-R\'{e}nyi models}


\author[BUAP]{C. T. Mart{\'\i}nez-Mart{\'\i}nez}
\author[BUAP]{J. A. M\'endez-Berm\'udez\corref{mycorrespondingauthor}}
\cortext[mycorrespondingauthor]{Corresponding author} 
\ead{jmendezb@ifuap.buap.mx}
\author[UIII]{Jos\'e M. Rodr\'{\i}guez}
\author[BUAP,UAGRO]{ Jos\'e M. Sigarreta }
\address[BUAP]{Instituto de F\'{\i}sica, Benem\'erita Universidad Aut\'onoma de Puebla,
Apartado Postal J-48, Puebla 72570, Mexico}
\address[UIII]{Universidad Carlos III de Madrid, Departamento de Matem\'aticas, Avenida de la Universidad
30, 28911 Legan\'es, Madrid, Spain}
\address[UAGRO]{Universidad Aut\'onoma de Guerrero, Centro Acapulco CP 39610,
Acapulco de Ju\'arez, Guerrero, Mexico}

\begin{abstract}

In this work we perform computational and analytical studies of the 
Randi\'c index $R(G)$ in Erd\"os-R\'{e}nyi 
models $G(n,p)$ characterized by $n$ vertices connected independently with probability $p \in (0,1)$.
First, from a detailed scaling analysis, we show that $\bra \overline{R}(G) \ket = \bra R(G)\ket/(n/2)$ scales 
with the product $\xi\approx np$, so we can define three regimes: 
a regime of mostly isolated vertices when $\xi < 0.01$ ($R(G)\approx 0$),
a transition regime for $0.01 < \xi < 10$ (where $0<R(G)< n/2$), and
a regime of almost complete graphs for $\xi > 10$ ($R(G)\approx n/2$). 
Then, motivated by the scaling of $\bra \overline{R}(G) \ket$, we analytically 
(i) obtain new relations connecting $R(G)$ with other topological indices and characterize graphs which are 
extremal with respect to the relations obtained and
(ii) apply these results in order to obtain inequalities on $R(G)$ for graphs in Erd\"os-R\'{e}nyi models.
\end{abstract}

\begin{keyword}
Randi\'c index \sep vertex-degree-based topological index \sep random graphs \sep Erd\"os-R\'{e}nyi graphs.

\MSC 05C07 \sep 05C80 \sep 92E10

\end{keyword}

\end{frontmatter}


\section{Introduction}

The interest in topological indices lies in the fact that they synthesize some of the fundamental properties of a molecule into a single value. With this in mind, 
several topological indices have been studied so far; it is worth noting the seminal work by 
Wiener (see \cite{Wi}) in which he used the distances of a  chemical graph in order to model  properties of alkanes.

The \emph{Randi\'c connectivity index} was defined in \cite{R} as
\begin{equation}
R(G) = \sum_{uv\in E(G)} \frac1{\sqrt{d_u d_v}}\, ,
\end{equation}
where $uv$ denotes the edge of the graph $G$, and $d_u$ is the degree of the 
vertex $u$. Indeed, there are lots of works dealing with this index (see, e.g., 
\cite{GF,LG,LS}).

In \cite{LZheng,LZhao,MN}, the 
\emph{first and second variable Zagreb indices} are defined as
$$
M_1^{\a}(G) = \sum_{u\in V(G)} d_u^{\a} \ ,
\qquad
M_2^{\a}(G) = \sum_{uv\in E(G)} (d_u d_v)^\a ,
$$
with $\a \in \RR$. The concept of \emph{variable molecular descriptors} was proposed as a new way of characterizing heteroatoms in 
molecules (see \cite{R2,R3}). The essential idea is that the
variables are determined during the regression; this allows to make the standard error of the estimate for a particular 
property (targeted in the study) as small as possible (see, e.g., \cite{MN}). The second variable Zagreb index is used in the structure-boiling point modeling of benzenoid hydrocarbons \cite{NMTJ}.

The \emph{general sum-connectivity index} was defined  in \cite{ZT2} as
$$
\chi_{\a}(G) = \sum_{uv\in E(G)} (d_u+ d_v)^\a .
$$
Some relations of these indices are  studied in (\cite{RS4}).

\medskip

In addition to the multiple applications of the Randi\'c index in physical chemistry, this index has found several 
applications in other research areas and topics, such as information theory \cite{GFK18}, network similarity 
\cite{NJ03}, protein alignment \cite{R15}, network heterogeneity \cite{E10}, and network robustness \cite{MMR17}. 
Moreover, in \cite{CDE15} the concept of graph entropy for weighted graphs was introduced, especially the 
Randi\'c weights. 

\medskip

We want to recall that graphs have been widely used to study the properties of highly complex systems.
Among them we can mention biological, social, and technological networks \cite{Costa011:AP,B13}. 
Moreover, graphs can be classified as deterministic (regular and fractal) or disordered (random) \cite{Mulken2011}. 
Deterministic graphs follow specific construction rules, while in random graphs the parameters take fixed values 
but the graph itself has a random structure.
In the later case a statistical study of graph ensembles with the same average properties must be performed, since
the analysis of a single random graph is meaningless. There are well-known models
of random graphs in the literature \cite{N10,Boccaletti06}, presumably the most popular are:
the Erd\"os-R\'{e}nyi model of random graphs, scale-free networks (introduced by Barab\'{a}si and Albert), 
and small-world networks (introduced by Watts and Strogatz). These three models have been extensively 
used to represent the organization of real-world complex systems (such as power grids or the Internet) through 
their underlying network structure \cite{Costa011:AP,N10,Boccaletti06}. 
\medskip

Although random graph models are not able to predict some properties observed in real-world networks, 
such as nonvanishing clustering coefficient and power-law degree distributions \cite{Boccaletti06}, they have been 
deeply studied theoretically (e.g.~\cite{Bollobas98}). In fact, several important results, such as the emergence of 
percolation, are analytically accesible from Erd\"os-R\'{e}nyi graphs \cite{N10,Bollobas98}.
Thus, here we consider Erd\"os-R\'{e}nyi random graphs, which were proposed by Solomonoff and 
Rapoport \cite{SR51} and investigated later in great detail by Erd\H{o}s and R\'enyi \cite{ER59,ER60}.

\medskip

This work is organized as follows.
First, in Sec.~\ref{Sec:Num} we perform a detailed scaling analysis of the average Randi\'c index to find its
{\it universal} parameter, i.e., the parameter that statistically fixes the average value of $R(G)$. 
Then, in Sec.~\ref{Sec:Ana}, we analytically 
(i) obtain new relations connecting $R(G)$ with other topological indices and
(ii) apply these results in order to obtain inequalities on $R(G)$ for graphs in Erd\"os-R\'{e}nyi models.

\section{Scaling analysis of the Randi\'c index on Erd\"os-R\'{e}nyi graphs}
\label{Sec:Num}

We start with a computational (and statistical) study of the Randi\'c index on Erd\"os-R\'{e}nyi graphs.
We consider random graphs $G$ from the standard Erd\"os-R\'{e}nyi model $G(n, p)$, i.e., $G$ has $n$ vertices 
and each edge appears independently with probability $p \in (0,1)$.

In Fig.~\ref{Fig1}(a) we show the average Randi\'c index $\bra R(G) \ket$ as a function of the probability $p$ of 
Erd\"os-R\'{e}nyi graphs $G(n, p)$ of several orders $n$. Here, the average $\bra \cdot \ket$ is computed over 
2000 random graphs $G(n, p)$.
We observe that the curves of $\bra R(G) \ket$, for all the values of $n$ considered here, have a very similar 
shape as a function of $p$: $\bra R(G) \ket$ shows a smooth transition (in log scale) from 
zero to $n/2$ when $p$ increases from zero (isolated vertices) to one (complete graphs). 
Note that $n/2$ is the maximal value that $R(G)$ can take.

Now, to ease our analysis, in Fig.~\ref{Fig1}(b) we present again $\bra R(G) \ket$ but now normalized to $n/2$:
\begin{equation}
\bra \overline{R}(G) \ket = \frac{\bra R(G) \ket}{n/2} .
\end{equation}
From this figure we can clearly see that the main effect of increasing $n$ is the 
displacement of the curves $\bra \overline{R}(G) \ket$ vs.~$p$ to the left on the $p$-axis. 
Moreover, the fact that these curves, plotted in semi-log scale, are shifted the same amount on the $p$-axis when 
doubling $n$ make us anticipate the existence of a scaling parameter that depends on $n$. In order to search for that 
scaling parameter we first establish a measure to characterize the position of the curves $\bra \overline{R}(G) \ket$ on 
the $p$-axis: We choose the value of $p$, that we label as $p^*$, for which 
$\bra \overline{R}(G) \ket \approx 0.5$; see the dashed line in Fig.~\ref{Fig1}(b).
Notice that $p^*$ locates the transition point from isolated vertices to
complete Erd\"os-R\'{e}nyi graphs of size $n$.

\begin{figure}
 \centering
\includegraphics[width=0.95\textwidth]{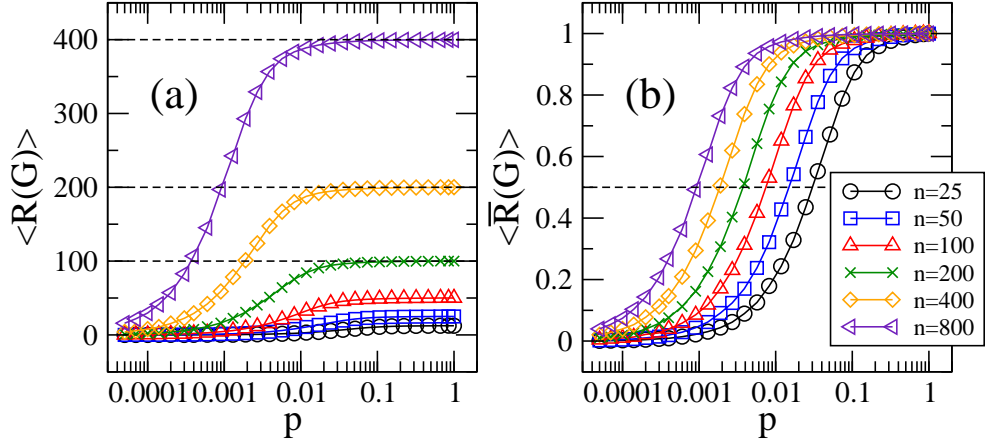}
\caption{(a) Average Randi\'c index $\bra R(G) \ket$ as a function of the probability $p$ of Erd\"os-R\'{e}nyi 
graphs $G(n,p)$ of different sizes $n\in[25,800]$. (b) $\bra R(G) \ket$ normalized to $n/2$, $\bra \overline{R}(G) \ket$,
as a function of $p$. Dashed lines in (a) indicate the values of $n/2$ for $n\in[200,800]$. The dashed line in (b) 
indicates $\bra \overline{R}(G) \ket = 0.5$, used to define $p^*$. Each symbol was computed by averaging 
over 2000 random graphs $G(n, p)$.}
\label{Fig1}
\end{figure}
\begin{figure}
 \centering
  \includegraphics[width=0.95\textwidth]{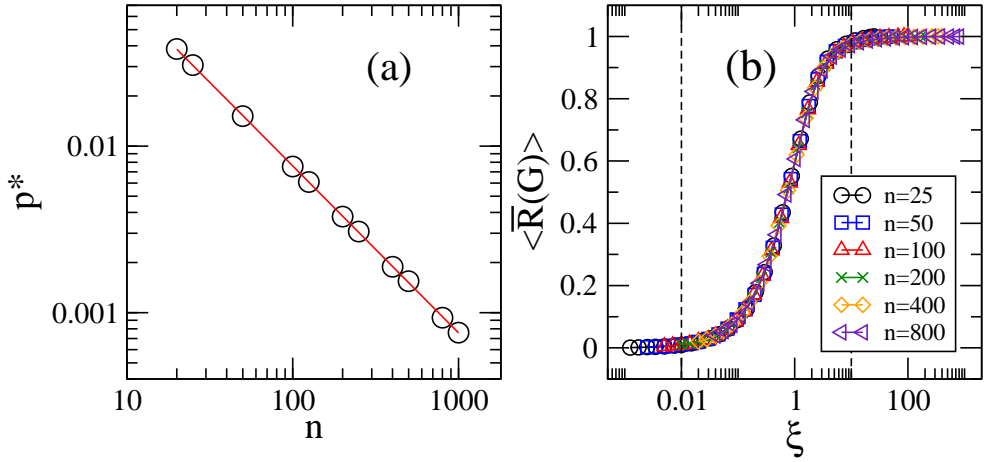}
\caption{(a) $p^*$ (defined as the value of $p$ for which $\bra \overline{R}(G) \ket \approx 0.5$) as a function 
of the graph size $n$. The red line is the fitting of Eq.~(\ref{scaling}) to the the data with fitting parameters
$\mathcal{C} = 0.76775$ and $\delta = -1.0021$. (b) $\bra \overline{R}(G) \ket$ as a function of 
$\xi$. Vertical dashed lines in (b) indicate: The regime of mostly isolated vertices ($\xi < 0.01$), the transition 
regime ($0.01 < \xi < 10$), and the regime of almost complete graphs ($\xi > 10$).}
\label{Fig2}
\end{figure}

Then, in Fig.~\ref{Fig2}(a) we plot $p^*$ versus $n$. The linear trend of the data (in log-log scale) 
in Fig.~\ref{Fig2}(a) suggests the power-law
\begin{equation}
\label{scaling}
p^* = \mathcal{C} n^\delta .
\end{equation}

In fact, Eq.~(\ref{scaling}) provides an excellent fitting to the data with $\mathcal{C} \approx 0.77$ and 
$\delta \approx -1$. Therefore, by plotting again the curves of $\bra \overline{R}(G) \ket$ now as a function of 
the probability $p$ divided by $p^*$, 
\begin{equation}
\label{xi}
\xi \equiv \frac{p}{p^*} \propto \frac{p}{n^\delta} \approx \frac{p}{n^{-1}} = np \ ,
\end{equation} 
we observe that curves for different graph sizes $n$ collapse on top of a single {\it universal} curve, see 
Fig.~\ref{Fig2}(b). This means that once the product $np$ is fixed, the average Randi\'c index on Erd\"os-R\'{e}nyi 
graphs is also fixed. This statement is in accordance with the results reported in \cite{MAMRP15,MAM13}, 
where the spectral and transport properties of Erd\"os-R\'{e}nyi graphs where shown to be universal for the 
scaling parameter $np$, see also \cite{GAC18,MM19,TFM19}.

Additionally, from our previous experience, see e.g., \cite{MAMRP15,MAM13,GAC18,MM19,TFM19}, we expect 
that other quantities related to $R(G)$ will also be scaled with $\xi$. Indeed, we validate this conjecture by 
analyzing the energy $E(n,p)$ of the Erd\"os-R\'{e}nyi graphs $G(n,p)$ defined as \cite{RS05,BGGC10}
\begin{equation}
\label{energy}
E(n,p) = \sum_{i=1}^n | e_{i} | \ ,
\end{equation}
where $e_{i}$ are the eigenvalues of the corresponding Randi\'c matrix \cite{RS05,BGGC10}:
\begin{equation}
R_{ij}=\left\{
\begin{array}{ll}
(d_i d_j)^{-1/2} \quad &\mbox{if $v_i$ $\sim$ $v_j$},\\
0 \quad &\mbox{otherwise}.
\end{array}
\right.
\end{equation}

 \begin{figure}
 \centering
 \includegraphics[width=\textwidth]{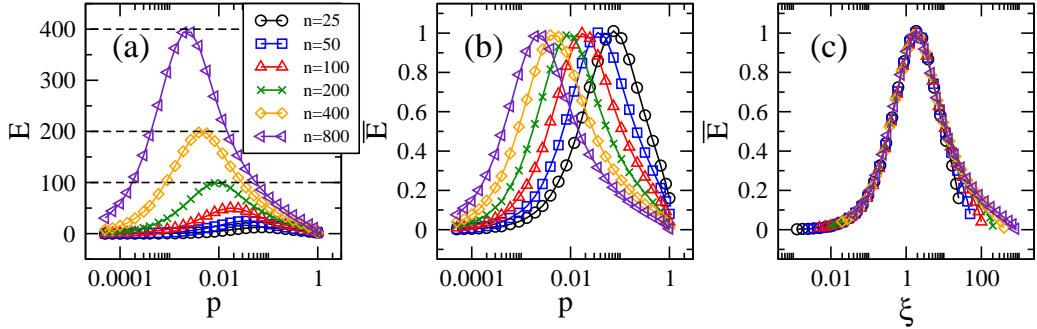}
\caption{(a) Randi\'c Matrix energy $E$ as a function of the probability $p$ for Erd\"os-R\'{e}nyi graphs of size 
$n$. Dashed lines indicate the values of $n/2$ for $n\in[200,800]$. (b) $\overline{E}=E/(n/2)$ as a function $p$. 
(c) $\overline{E}$ as a function $\xi$.}
\label{Fig3}
\end{figure}

Thus in Fig.~\ref{Fig3}(a) we present the energy $E$ as a function of the probability $p$ of Erd\"os-R\'{e}nyi graphs 
of several sizes $n$. The curves $E$ vs.~$p$ show a similar behavior for different values of $n$: For small $p$, $E$ 
increases with $p$ until it reaches $n/2$ (the maximum value it can take), then $E$ decreases from its maximum 
by further 
increasing $p$ giving to the curves $E$ vs.~$p$ a bell-like shape in log scale. Now, for convenience, we normalize 
$E$ to $n/2$ (that we name $\overline{E}$) and plot it in Fig.~\ref{Fig3}(b). Here it is clear that the curves $\overline{E}$
vs.~$p$ are very similar but shifted to the left on the $p$-axis for increasing $n$. Finally, in Fig.~\ref{Fig3}(c) we 
plot $\overline{E}$ as a function of the scaling parameter $\xi$, see Eq.~(\ref{xi}), and show that all curves fall one 
on top of the other
(except for finite size effects at large $\xi$). Therefore, we confirm that the energy of Erd\"os-R\'{e}nyi graphs (as
defined in Eq.~(\ref{energy})) also scales with the parameter $\xi$; that is, once $\xi$ is fixed the normalized energy 
$\overline{E}$ is (statistically) the same for different parameter combinations $(n,p)$.
Additionally, from Fig.~\ref{Fig3}(c) we can conclude that the maximum value of $E$ occurs in the interval 
$1<\xi<2$, in close agreement with the delocalization transition value for the eigenvectors of Erd\"os-R\'{e}nyi graphs 
reported in \cite{MAMRP15,MF91,FM91b,EE92,E92} to be $\xi \approx 1.4$. 

Even though we have shown that $\xi$ scales both $\bra \overline{R}(G) \ket$ and  $\overline{E}$ reasonably well,
it is fair to say that there are additional quantities related to $\overline{R}(G)$ which are still size dependent for 
fixed $\xi$. See for example Fig.~\ref{Fig4}, where we show probability distribution functions of $\overline{R}(G)$
at fixed $\xi$. In this figure we observe that, even for fixed $\xi$ (or equivalently, for fixed $\bra \overline{R}(G) \ket$), 
$P(\overline{R}(G))$ becomes narrower for increasing $n$. This means that the variance and the minimal and
maximal values of $\overline{R}(G)$ change with $n$, as can be clearly seen in Fig.~\ref{Fig5}.
This motivate us to look for bounds and inequalities on the Randi\'c index on Erd\"os-R\'{e}nyi graphs, 
which is the main topic of the following Section. 

\begin{figure}
 \centering
 \includegraphics[width=0.95\textwidth]{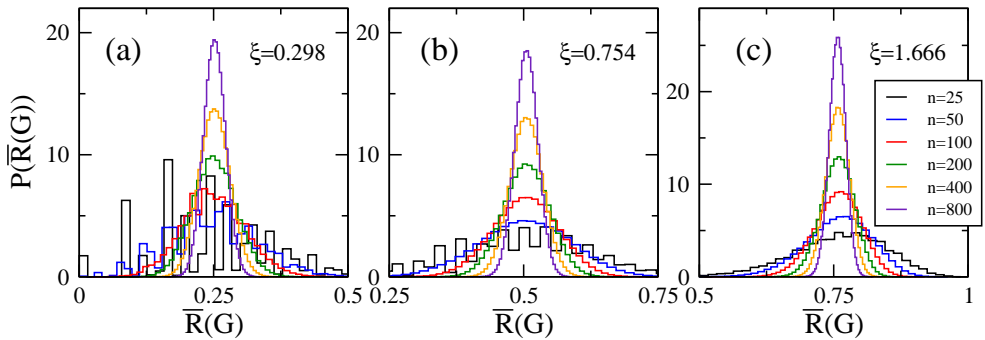}
\caption{Probability distribution functions of $\overline{R}(G)$, $P(\overline{R}(G))$, 
for several graph sizes $n$ at fixed values of $\bra \overline{R}(G) \ket$: 
(a) $\bra \overline{R}(G) \ket=0.25$, 
(b) $\bra \overline{R}(G) \ket=0.5$, and 
(c) $\bra \overline{R}(G) \ket=0.75$. 
The corresponding values of $\xi$ are given in the panels.
Each histogram was constructed with 2000 values of $\overline{R}(G)$.}
\label{Fig4}
\end{figure}
\begin{figure}
 \centering
 \includegraphics[width=0.95\textwidth]{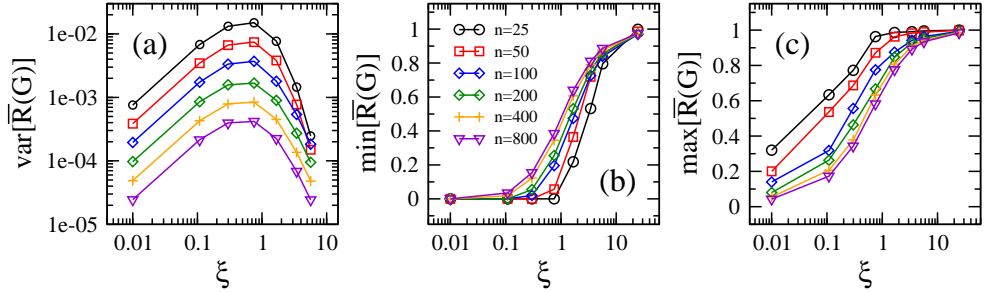}
\caption{(a) $\mbox{var}[\overline{R}(G)]$, (b) $\mbox{min}[\overline{R}(G)]$, and (c) $\mbox{max}[\overline{R}(G)]$
as a function of $\xi$. Each symbol was computed from 2000 values of $\overline{R}(G)$.}
\label{Fig5}
\end{figure}

\bigskip

\section{Inequalities for the Randi\'c index on Erd\"os-R\'{e}nyi models}
\label{Sec:Ana}

We recall that we consider a Random Graph $G$ from the standard Erd\"os-R\'{e}nyi model $G(n, p)$.
In the following, $G$ denotes a finite simple graph such that each connected component of $G$ has, at least, 
one edge (there are no isolated vertices).
We say that a statement holds for \emph{almost every graph} if the probability of the set of graphs for which the statement fails tends to $0$ as $n \to \infty$.

The following facts about the Erd\"os-R\'{e}nyi model are well-known \cite{Bollobas} (see also \cite{Dalfo}):
\begin{itemize}
\item[(1)] Almost every graph $G$ has $m = p\,n(n-1)/2 + o\,(n^2)$ edges.
\item[(2)] Almost every graph $G$ has maximum degree $\D = p(n - 1) + (2pqn \log n)^{1/2} + o\,((n \log n)^{1/2})$,
with $p \in [1/2,1)$ and $q = 1 - p$.
\item[(3)] Almost every graph $G$ has minimum degree $\d = q(n - 1) - (2pqn \log n)^{1/2} + o\,((n \log n)^{1/2})$,
with $p \in [1/2,1)$ and $q = 1 - p$.
\end{itemize}

In the previous equalities we are using Landau's notation: Recall that $f(n)=g(n)+o(a(n))$ means that
$$
\lim_{n \to \infty} \frac{f(n)-g(n)}{a(n)} = 0,
$$
and $f(n)=g(n)+O(a(n))$ means that
$$
\frac{f(n)-g(n)}{a(n)}
$$
is a bounded sequence.

The following result relates the Randi\'c and the $(-2)$-sum-connectivity indices. 


\begin{theorem} \label{t:gam}

Let $G$ be a graph with minimum degree $\d$ and maximum degree $\D$. Then

\begin{equation} \label{eq:gam}
\begin{aligned}
&4 \,\d \, \chi_{_{-2}}(G)\le R(G) \le 4 \D \, \chi_{_{-2}}(G),\qquad & \mbox{if} \ \ \d/\D \ge t_0, \nonumber \\
&4 \,\d \, \chi_{_{-2}}(G) \le R(G) \le \frac{(\D+\d)^2}{\sqrt{ \D\d}} \, \chi_{_{-2}}(G),\qquad & \mbox{if} \ \ \d/\D < t_0,
\end{aligned}
\end{equation}
where $t_0$ is the unique solution of the equation $t^3+5t^2+11t-1=0$ in the interval $(0,1)$.
The equality in the lower bound is attained if and only if $G$ is regular.
The equality in the first upper bound is attained if and only if $G$ is regular;
the equality in the second upper bound is attained if and only if $G$ is a biregular graph.
\end{theorem}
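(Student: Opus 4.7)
The plan is to rewrite $R(G)$ so that each edge summand is a positive multiple of the corresponding edge summand of $\chi_{_{-2}}(G)$. For every edge $uv\in E(G)$,
\[
\frac{1}{\sqrt{d_u d_v}} \;=\; F(d_u,d_v)\cdot\frac{1}{(d_u+d_v)^2}, \qquad F(x,y) \,:=\, \frac{(x+y)^2}{\sqrt{xy}},
\]
so $R(G)$ is a weighted sum of the terms of $\chi_{_{-2}}(G)$ with positive weights $F(d_u,d_v)\in[\min F,\max F]$ over the square $[\d,\D]\times[\d,\D]$. The double inequality then reduces to finding the extrema of $F$ on this square and describing which pairs $(d_u,d_v)$ attain them.

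The lower bound follows from AM-GM: $(x+y)^2\ge 4xy$ gives $F(x,y)\ge 4\sqrt{xy}\ge 4\d$, with equality on the square only at $(\d,\d)$. Summing over edges yields $R(G)\ge 4\d\,\chi_{_{-2}}(G)$, and edgewise equality forces $d_u=d_v=\d$, so $G$ must be $\d$-regular.

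For the upper bound I would compute
\[
\frac{\partial F}{\partial x}(x,y)\;=\;\frac{y(x+y)(3x-y)}{2(xy)^{3/2}},
\]
which together with its symmetric counterpart shows $F$ has no interior critical point on the square. On the boundary: the diagonal $F(x,x)=4x$ is increasing; on the side $x=\d$ the derivative is proportional to $3y-\d>0$, so the maximum is at $(\d,\D)$; on the side $y=\D$ the only critical point $x=\D/3$ (present when $\d\le\D/3$) is a local minimum, so the maximum is at $(\d,\D)$ or $(\D,\D)$. Consequently
\[
\max F \;=\; \max\!\left\{\,4\D,\;\frac{(\d+\D)^2}{\sqrt{\d\D}}\,\right\}.
\]

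To decide which term dominates, set $t:=\d/\D\in(0,1]$ and $u:=\sqrt{t}$. Then $(\d+\D)^2/\sqrt{\d\D}\le 4\D$ is equivalent to $(1+u^2)^2\le 4u$, i.e., $u^4+2u^2-4u+1\le 0$, which factors as $(u-1)(u^3+u^2+3u-1)\le 0$. Since $u-1<0$ on $(0,1)$, this reduces to $u^3+u^2+3u-1\ge 0$; the cubic is strictly increasing, negative at $0$ and positive at $1$, so has a unique root $u_\star\in(0,1)$. Squaring the defining relation $u(u^2+3)=1-u^2$ (both sides positive on $(0,1)$) rationalizes the radical and produces exactly $t^3+5t^2+11t-1=0$, the equation defining $t_0$. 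Hence the upper bound is $4\D\,\chi_{_{-2}}(G)$ when $\d/\D\ge t_0$ and $(\d+\D)^2/\sqrt{\d\D}\cdot\chi_{_{-2}}(G)$ otherwise. Equality in the first case forces $d_u=d_v=\D$ on every edge, giving regularity; equality in the second forces $\{d_u,d_v\}=\{\d,\D\}$ on every edge, i.e., biregularity. The main obstacle is the polynomial bookkeeping: recognizing the trivial root $u=1$ to factor $u^4+2u^2-4u+1$, and then executing the squaring step that converts the resulting cubic in $u$ into the one in $t$ stated in the theorem.
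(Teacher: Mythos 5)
Your argument is, up to taking reciprocals, the same as the paper's: the paper analyzes $g(x,y)=\sqrt{xy}/(x+y)^2$ on $[\d,\D]^2$ and bounds $\chi_{_{-2}}(G)$ between multiples of $R(G)$, which is exactly your study of $F=1/g$ read the other way around. Your derivative computation, the identification of $\max F=\max\{4\D,(\d+\D)^2/\sqrt{\d\D}\}$, and the substitution $u=\sqrt{t}$ followed by squaring to recover $t^3+5t^2+11t-1=0$ all check out (the paper instead expands $(1+t)^4\le 16t$ directly and factors the quartic in $t$ as $(t-1)(t^3+5t^2+11t-1)$; the two computations are equivalent). There is one genuine, if small, gap: in the case $\d/\D\ge t_0$ you assert that equality in $R(G)\le 4\D\,\chi_{_{-2}}(G)$ forces $d_u=d_v=\D$ on every edge, but this requires $(\D,\D)$ to be the \emph{unique} maximizer of $F$, which fails exactly when $4\D=(\d+\D)^2/\sqrt{\d\D}$, i.e.\ when $\d/\D=t_0$ --- a value your hypothesis does not exclude. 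If that boundary case could occur, a graph whose edges have degree pairs $(\D,\D)$ or $(\d,\D)$ mixed together would also attain equality without being regular. The paper closes this hole by noting that $t_0\notin\QQ$ (the rational root theorem applied to $t^3+5t^2+11t-1$), whereas $\d/\D$ is rational, so $\d/\D=t_0$ never happens and $\d/\D\ge t_0$ with $\d<\D$ in fact forces the strict inequality $4\D>(\d+\D)^2/\sqrt{\d\D}$. You should add this observation (or the case $\d=\D$, which is trivially regular) to make the ``only if'' direction of the first equality characterization complete; the rest of the proof stands.
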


\begin{proof}
Since $a(t)=t^3+5t^2+11t-1$ is an increasing function on the interval $[0,1]$, $a(0)<0$ and $a(1)>0$,
there exists a unique solution of the equation $t^3+5t^2+11t-1=0$ in the interval $(0,1)$.
Hence, the number $t_0$ is well-defined.

Let us compute the maximum and minimum values of the function
$g: [\d,\D] \times [\d,\D] \to \RR$
given by
$$
g(x,y)
= \frac{\sqrt{xy}}{(x+y)^2} \,.
$$
\indent
Since $g(x,y)=g(y,x)$, we can assume that $x \le y$.
The partial derivatives of $g$ are
\begin{equation*}
\begin{split}
\frac{\partial g}{\partial x}(x,y)
& = \frac{x^{-1/2}y^{1/2} (x+y) -4x^{1/2}y^{1/2}}{2(x+y)^3}
\\
& = x^{-1/2}y^{1/2} \, \frac{ y -3x }{2(x+y)^3} \, ,
\\
\frac{\partial g}{\partial y}(x,y)
& = y^{-1/2}x^{1/2} \,\frac{ x -3y }{2(x+y)^3} \,.
\end{split}
\end{equation*}
\indent
Since $y \ge x \ge \d > 0$, we obtain $\partial g/\partial y<0$ and $g$ is a decreasing function on $y$.
Therefore, $g$ attains its minimum value on $\{ (x,\D) |\;\d\le x \le \D \}$
and its maximum value on $\{ (x,x) |\;\d\le x \le \D \}$.
Note that $g(x,x)=1/(4x) \le 1/(4\d)$.

If $\D \le 3\d$, then $\partial g/\partial x(x,\D) < 0$ for every $x > \d$.

If $\D > 3\d$, then $\partial g/\partial x(x,\D) > 0$ for every $\d \le x < \D/3$ and $\partial g/\partial x(x,\D) < 0$ for every $\D/3 < x \le \D$.

Hence, we have in every case
\begin{equation*}
\begin{split}
\min \Big\{ \, \frac{1}{4 \D} \,,\, \frac{\sqrt{ \D\d}}{(\D+\d)^2} \, \Big\}
= \min \big\{ g(\D, \D) ,\, g(\d,\D) \big\}
\le g(x,y)
\le \frac{1}{4 \d}\,.
\end{split}
\end{equation*}
Thus,
\begin{equation*}
\begin{aligned}
&\min \Big\{ \, \frac{1}{4 \D} \,,\, \frac{\sqrt{ \D\d}}{(\D+\d)^2} \, \Big\}\, \frac{1}{\sqrt{ d_u d_v}} \le \frac{1}{(d_u+d_v)^2} \le \frac{1}{4 \d}\, \frac{1}{\sqrt{ d_u d_v}} \,, \\
&\min \Big\{ \, \frac{1}{4 \D} \,,\, \frac{\sqrt{ \D\d}}{(\D+\d)^2} \, \Big\} R(G)\le \chi_{_{-2}}(G)\le \frac{1}{4 \d}\, R(G) .
\end{aligned}
\end{equation*}

\indent

If the equality in the lower bound  is attained, then
$(d_u,d_v)=(\d,\d)$ for all $uv \in E(G)$;
hence, $d_u = \d$ for all $u \in V(G)$ and so, $G$ is regular.

In order to prove the upper bounds, it suffices to show that the inequality
\begin{equation}
\label{eq:ff}
\frac{1}{4 \D} \le \frac{\sqrt{ \D\d}}{(\D+\d)^2}
\end{equation}
holds if and only if $\d/\D \ge t_0$.

Inequality (\ref{eq:ff}) is equivalent to the following statements
\begin{equation*}
\begin{aligned}
&(\D+\d)^2 \le 4 \D \sqrt{ \D\d} \,,
& \qquad
&\Big(1+\displaystyle \frac{\d}{\D}\Big)^2 \le 4 \sqrt{\displaystyle \frac{\d}{\D}} \,,\\
&\Big(1+\displaystyle \frac{\d}{\D}\Big)^4 \le 16 \,\displaystyle \frac{\d}{\D} \,,
& \qquad
&\displaystyle \frac{\d^4}{\D^4} + 4\,\displaystyle \frac{\d^3}{\D^3} + 6\,\displaystyle \frac{\d^2}{\D^2} -12\,\displaystyle \frac{\d}{\D} + 1 \le 0 .
\end{aligned}
\end{equation*}
Since $0 < \d/\D \le 1$, let us consider the function $b(t)=t^4 + 4t^3 + 6t^2 -12t + 1$ for $t \in (0,1]$.
Since $b(t)= (t-1)(t^3+5t^2+11t-1)= (t-1)\,a(t)$, we have $a(t) \le 0$ if and only if $t \in [t_0,1]$.
Hence, inequality (\ref{eq:ff}) holds if and only if $\d/\D \ge t_0$.
Since the coefficients of the polynomial $a(t)=t^3+5t^2+11t-1$ are rational numbers, and the coefficients of $t^3$ and $t^0$ of the polynomial $a(t)$ are $1$ and $-1$, respectively, we have that $t_0 \notin \QQ$.
Note that this condition is equivalent to $\d/\D > t_0$, since $t_0 \notin \QQ$;
therefore, the equality in (\ref{eq:ff}) is attained if and only if $\d=\D$.

Therefore, the upper bounds hold.

If $\d/\D \ge t_0$, then the previous argument gives that $f$ attains its minimum value just at the point $(\D,\D)$.
Thus, the equality in the upper bound is attained if and only if $(d_u,d_v)=(\D,\D)$ for every $uv \in E(G)$, i.e., $G$ is regular.

If $\d/\D < t_0$, then $f$ attains its minimum value just at the points $(\d,\D)$ and $(\D,\d)$.
Hence, the equality in the upper bound is attained if and only if $\{d_u,d_v\}=\{\d,\D\}$ for every $uv \in E(G)$, i.e., $G$ is biregular.
In this case, $G$ can not be regular since $\d < t_0\D < \D$.
\end{proof}

\smallskip

Theorem \ref{t:gam} have the following consequence on Random Graphs.

\begin{corollary} \label{c:gam}
In the Erd\"os-R\'{e}nyi model $G(n, p)$, with $p \in [1/2,1)$ and $q=1-p$, almost every graph $G$ satisfies
$$
4 qn + O\,((n \log n)^{1/2})
\le \frac{R(G)}{\chi_{_{-2}}(G)}
\le \max \Big\{ 4 p \,,\, \frac{1}{\sqrt{pq}} \, \Big\}\, n + O\,((n \log n)^{1/2}).
$$
\end{corollary}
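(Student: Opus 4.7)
The approach is to apply Theorem \ref{t:gam} and plug in the Erd\"os--R\'enyi degree asymptotics from facts (2) and (3). First, I would observe that the two cases of Theorem \ref{t:gam} can be merged into the single bound
\begin{equation*}
4\d\,\chi_{_{-2}}(G) \le R(G) \le \max\Big\{\, 4\D,\ \frac{(\D+\d)^2}{\sqrt{\D\d}}\,\Big\}\,\chi_{_{-2}}(G),
\end{equation*}
since whichever of $4\D$ and $(\D+\d)^2/\sqrt{\D\d}$ is larger is the reciprocal of the minimum of the function $g$ on $[\d,\D]^2$ that appears in the proof of Theorem \ref{t:gam}, and hence always majorizes $R(G)/\chi_{_{-2}}(G)$. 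Because every connected component of $G$ has an edge, $\chi_{_{-2}}(G)>0$ and we may divide through.

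For the lower bound, fact (3) immediately gives
\begin{equation*}
4\d = 4q(n-1) - 4(2pqn\log n)^{1/2} + o((n\log n)^{1/2}) = 4qn + O((n\log n)^{1/2}),
\end{equation*}
absorbing the constant $-4q$ and the $(2pqn\log n)^{1/2}$ correction into the error term.

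For the upper bound, the plan is to expand each candidate inside the max separately. Fact (2) yields $4\D = 4pn + O((n\log n)^{1/2})$ in the same manner. For the other candidate, I would write $\D = p(n-1)+A$ and $\d = q(n-1)+B$ with $A,B = O((n\log n)^{1/2})$ whose leading $\pm(2pqn\log n)^{1/2}$ parts cancel in the sum, so that $\D+\d = n + o((n\log n)^{1/2})$ and $\D\,\d = pq\,n^2 + O(n^{3/2}(\log n)^{1/2})$. Extracting $\sqrt{\D\d} = \sqrt{pq}\,n + O((n\log n)^{1/2})$ and dividing yields
\begin{equation*}
\frac{(\D+\d)^2}{\sqrt{\D\d}} = \frac{n}{\sqrt{pq}} + O((n\log n)^{1/2}).
\end{equation*}
Since both quantities inside the max now have the form $(\text{constant})\cdot n + O((n\log n)^{1/2})$, their maximum is $\max\{4p,\,1/\sqrt{pq}\}\,n + O((n\log n)^{1/2})$, which is the claimed upper bound.

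The main, and essentially only, obstacle is keeping the error analysis clean when dividing $(\D+\d)^2$ by $\sqrt{\D\d}$: one must verify that the $o((n\log n)^{1/2})$ remainder in $\D+\d$, after squaring and division by the leading $\sqrt{pq}\,n$, collapses to the advertised $O((n\log n)^{1/2})$, and likewise that the perturbation in $\sqrt{\D\d}$ does not spoil the overall error. Beyond this bookkeeping, no new ingredient is required; the corollary reduces to a direct substitution of the Erd\"os--R\'enyi degree asymptotics into Theorem \ref{t:gam}.
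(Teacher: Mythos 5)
Your proposal is correct and follows essentially the same route as the paper: the paper likewise rewrites Theorem \ref{t:gam} as the single two-sided bound $4\d \le R(G)/\chi_{_{-2}}(G) \le \max\{4\D, (\D+\d)^2/\sqrt{\D\d}\}$ and then substitutes the degree asymptotics from items (2) and (3), obtaining $(\D+\d)^2/\sqrt{\D\d} = n/\sqrt{pq} + O((n\log n)^{1/2})$ exactly as you do. The error bookkeeping you flag is handled in the paper by the same expansion you sketch, so no additional ingredient is needed.
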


\begin{proof}
The conclusion in Theorem \ref{t:gam} can be written as follows:
\begin{equation} \label{eq:c:gam}
4 \,\d
\le \frac{R(G)}{\chi_{_{-2}}(G)}
\le \max \Big\{ 4 \D \,,\, \frac{(\D+\d)^2}{\sqrt{ \D\d}} \, \Big\} .
\end{equation}
Thus, the first inequality is a direct consequence of (\ref{eq:c:gam}) and $(3)$.
Let us prove the second one.
Items $(2)$ and $(3)$ give for almost every graph
\begin{equation*}
\begin{split}
\frac{(\D+\d)^2}{\sqrt{ \D\d}}
& = \frac{\big( n+o\,((n \log n)^{1/2}) \big)^2}{\sqrt{ pqn^2+ O\,(n( n\log n)^{1/2})}}
= \frac{ n^2+o\,(n (n \log n)^{1/2})}{\sqrt{ pq}\, n + O\,(( n\log n)^{1/2})}
\\
& = \frac{ n}{\sqrt{ pq}}\Big( 1- \frac{O\,(( n\log n)^{1/2})}{\sqrt{ pq}\, n} \Big) + \displaystyle\frac{ o\,(n (n \log n)^{1/2})}{\sqrt{ pq}\, n + O(( n\log n)^{1/2})}
\\
& = \frac{ n}{\sqrt{ pq}} + O\,((n \log n)^{1/2}) + o\,( (n \log n)^{1/2})
\\
& =\frac{ n}{\sqrt{ pq}} + O\,((n \log n)^{1/2}) .
\end{split}
\end{equation*}

This fact, (\ref{eq:c:gam}) and item $(2)$ give the second inequality for almost every graph.
\end{proof}

\medskip

Corollary \ref{c:gam} has the following consequence.

\begin{corollary} \label{c:gam2}
In the Erd\"os-R\'{e}nyi model $G(n, p)$, with $p = 1/2$, almost every graph $G$ satisfies
$$
\frac{R(G)}{\chi_{_{-2}}(G)}
= 2 n + O\,((n \log n)^{1/2}).
$$
\end{corollary}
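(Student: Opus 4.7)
The plan is to obtain the statement as a direct specialization of Corollary \ref{c:gam} to $p=1/2$, observing that the lower and upper envelopes produced there coincide to leading order at this symmetric point. Concretely, Corollary \ref{c:gam} asserts that for almost every graph
$$
4 q n + O((n\log n)^{1/2})
\le \frac{R(G)}{\chi_{_{-2}}(G)}
\le \max\Big\{4p,\;\frac{1}{\sqrt{pq}}\Big\}\, n + O((n\log n)^{1/2}).
$$
I would first substitute $p=q=1/2$ into the lower bound, which immediately yields $4qn+O((n\log n)^{1/2})=2n+O((n\log n)^{1/2})$.

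Next I would evaluate the constant controlling the upper bound at $p=1/2$. A direct computation gives $4p=2$ and $1/\sqrt{pq}=1/\sqrt{1/4}=2$, so both candidates inside the maximum coincide and $\max\{4p,1/\sqrt{pq}\}=2$. Hence the upper bound also reduces to $2n+O((n\log n)^{1/2})$, and the two-sided estimate in Corollary \ref{c:gam} collapses to a single asymptotic equality, yielding the claim by squeezing.

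There is essentially no nontrivial obstacle: the only thing to verify is the arithmetic coincidence $4p=1/\sqrt{pq}$ at $p=1/2$, which reflects the fact that the two extremal configurations distinguished in Theorem \ref{t:gam} (regular versus biregular, corresponding to $\delta/\Delta \ge t_0$ versus $\delta/\Delta<t_0$) produce the same upper constant precisely at the symmetric Erd\"os-R\'enyi parameter. For almost every $G(n,1/2)$ one also has $\delta/\Delta\to 1 > t_0$, so the regular regime is the active one, but this is not needed for the proof since the corollary already takes the maximum of both candidates. The entire argument is therefore a one-line specialization plus a squeeze.
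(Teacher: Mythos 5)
Your proposal is correct and matches the paper's intent exactly: the paper states Corollary \ref{c:gam2} as an immediate consequence of Corollary \ref{c:gam}, and the argument is precisely the specialization $p=q=1/2$, where $4q=2$ and $\max\{4p,1/\sqrt{pq}\}=\max\{2,2\}=2$, so the two bounds squeeze to the stated asymptotic equality.
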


The following technical result appears in \cite[Corollary 2.3]{RS2}.

\begin{lemma} \label{c:t}
Let $g$ be the function $g(x,y)=2\sqrt{xy}/(x + y)$ with $0<a\le x,y \le b$. Then
$$
\frac{2\sqrt{ab}}{a + b} \le g(x,y) \le 1.
$$
\end{lemma}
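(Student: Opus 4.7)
The upper bound is immediate from the AM--GM inequality: since $x+y \ge 2\sqrt{xy}$ for positive $x,y$, one has $g(x,y) = 2\sqrt{xy}/(x+y) \le 1$, with equality if and only if $x=y$. So the only real work is in establishing the lower bound $g(x,y) \ge 2\sqrt{ab}/(a+b)$ on the closed square $[a,b]^2$.

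My plan is to reduce the two-variable problem to a one-variable calculus problem by exploiting the homogeneity of $g$. Since $g$ is symmetric and $g(\lambda x, \lambda y) = g(x,y)$ for every $\lambda >0$, the value of $g(x,y)$ depends only on the ratio $t = x/y$. Writing
\begin{equation*}
g(x,y) = 2\,\frac{\sqrt{t}}{1+t} =: 2\, h(t), \qquad t=x/y \in [a/b,\,b/a],
\end{equation*}
I would compute $h'(t) = (1-t)/\bigl(2\sqrt{t}(1+t)^2\bigr)$ and observe that $h$ strictly increases on $(0,1]$ and strictly decreases on $[1,\infty)$, attaining its global maximum $h(1)=1/2$ at $t=1$. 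Hence on the interval $[a/b,\,b/a]$, which is symmetric about $1$ in the multiplicative sense, $h$ attains its minimum at the two endpoints, where a direct computation gives $h(a/b)=h(b/a) = \sqrt{ab}/(a+b)$.

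From this one concludes $g(x,y) = 2\,h(x/y) \ge 2\sqrt{ab}/(a+b)$, with equality precisely when $\{x,y\} = \{a,b\}$. As an alternative route — closer in spirit to the partial-derivative analysis already carried out in the proof of Theorem~\ref{t:gam} — one can compute
\begin{equation*}
\frac{\partial g}{\partial x}(x,y) = \frac{\sqrt{y}\,(y-x)}{\sqrt{x}\,(x+y)^2},
\end{equation*}
which shows that, for fixed $y$, the function $x \mapsto g(x,y)$ is monotone away from $x=y$; by symmetry the same holds in $y$. This monotonicity forces the minimum of $g$ on $[a,b]^2$ to be attained at a corner, and a direct check of the four corners selects $(a,b)$ and $(b,a)$, giving again the value $2\sqrt{ab}/(a+b)$.

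There is no serious obstacle here; the only step requiring a bit of care is verifying that the critical point $t=1$ of $h$ is a maximum rather than a minimum, so that the minimum of $h$ on the interval $[a/b,b/a]$ is indeed forced to the endpoints — but this is settled by the sign of $h'(t)$ computed above.
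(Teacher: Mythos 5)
Your argument is correct, but note that the paper itself gives no proof of this lemma at all: it is quoted verbatim from \cite[Corollary 2.3]{RS2}, so there is nothing internal to compare against line by line. Your proof is a clean, self-contained replacement. The upper bound via AM--GM is exactly right, and the reduction of the lower bound to the one-variable function $h(t)=\sqrt{t}/(1+t)$ via homogeneity is the most economical route: the identity $h(t)=h(1/t)$ makes the two endpoints of $[a/b,\,b/a]$ give the same value $\sqrt{ab}/(a+b)$, and the sign of $h'(t)=(1-t)/\bigl(2\sqrt{t}(1+t)^2\bigr)$ forces the minimum there. Your equality characterization also checks out (if $a<b$, then $x/y=a/b$ together with $a\le x,y\le b$ forces $x=a$, $y=b$). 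Your alternative route via $\partial g/\partial x$ is essentially the same technique the authors do spell out in their proof of Theorem~\ref{t:gam} for the related function $\sqrt{xy}/(x+y)^2$ (fix one variable, locate the sign change of the partial derivative, push the extremum to the boundary), so it is the version most in the spirit of the paper; the homogeneity reduction is the version that buys you the shortest argument and makes the symmetry of the two minimizing corners transparent. Either way the lemma is established.
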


Given a graph $G$, let us define
$$
\d_G = \min_{uv\in E(G)} \frac{2\sqrt{d_u d_v}}{d_u + d_v} \,,
\qquad \quad
\D_G = \min_{uv\in E(G)} \frac{2\sqrt{d_u d_v}}{d_u + d_v} \,.
$$
Let $G$ be a graph with maximum degree $\D$ and minimum degree $\d$. Then Lemma \ref{c:t} gives, for every 
$uv\in E(G)$,
\begin{equation} \label{eq:c:t}
\frac{2\sqrt{\D \d}}{\D + \d} \le \d_G \le \frac{2\sqrt{d_u d_v}}{d_u + d_v} \le \D_G \le 1.
\end{equation}

Since
$$
\d_G \le \frac{2 d_u d_v}{(d_u + d_v)\sqrt{d_u d_v}} \le \D_G
$$
for every $uv\in E(G)$, we obtain
\begin{equation} \label{eq:Dal}
\frac{\d_G}{2}\, \frac{d_u + d_v}{d_u d_v} \le \frac{1}{\sqrt{d_u d_v}} \le \frac{\D_G}{2}\, \frac{d_u + d_v}{d_u d_v} \,.
\end{equation}
For every function $f$, we have
$$
\sum_{uv\in E(G)} \big( f(d_u)+f(d_u) \big)
= \sum_{u\in V(G)} d_u f(d_u),
$$
and so,
$$
\sum_{uv\in E(G)} \frac{d_u + d_v}{d_u d_v}
= \sum_{uv\in E(G)} \Big( \frac{1}{d_u} + \frac{1}{d_v} \Big)
= \sum_{u\in V(G)} d_u \,\frac{1}{d_u}
= \sum_{u\in V(G)} 1
= n.
$$
This equality and (\ref{eq:Dal}) give the inequalities:
$$
\frac{n \d_G}{2} \le R(G) \le \frac{n \D_G}{2} \,.
$$

A similar result is proved in \cite{Dalfo}; there, the author uses an argument based on differential calculus.

As a consequence of the previous result and (\ref{eq:c:t}), we obtain the known inequalities
\begin{equation} \label{eq:c:Dal}
\frac{\sqrt{ \D\d}}{\D+\d} \, n \le R(G) \le \frac{n}{2} \,.
\end{equation}
Notice that the right inequality has already been computationally verified in Fig.~\ref{Fig1}.

\begin{proposition} \label{c:Dal}
In the Erd\"os-R\'{e}nyi model $G(n, p)$, with $p \in [1/2,1)$ and $q = 1 - p$, almost every graph $G$ satisfies
$$
R(G)
\geq \sqrt{ pq} \,n + O\,((n \log n)^{1/2})\,.
$$
\end{proposition}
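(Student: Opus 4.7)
The plan is to deduce the bound directly from the known inequality (\ref{eq:c:Dal}), namely
$$
R(G) \ge \frac{\sqrt{\D\d}}{\D+\d}\, n,
$$
by substituting the asymptotic expressions for $\D$ and $\d$ provided by items $(2)$ and $(3)$ of the list of properties of the Erd\"os-R\'{e}nyi model. Since the statement concerns almost every graph, it suffices to compute $\sqrt{\D\d}/(\D+\d)$ under these asymptotics.

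First I would evaluate the denominator. Because the two noise terms $\pm(2pqn \log n)^{1/2}$ cancel when we add the expressions for $\D$ and $\d$, we obtain
$$
\D + \d = (p+q)(n-1) + o\bigl((n\log n)^{1/2}\bigr) = n + O\bigl((n\log n)^{1/2}\bigr),
$$
using $p+q=1$. Next I would compute the product $\D\d$. Expanding the product of the two asymptotic expansions, the leading term is $pq(n-1)^2$, while the cross terms involving $(2pqn\log n)^{1/2}$ contribute only $O\bigl(n(n\log n)^{1/2}\bigr)$, and the remaining terms are of even lower order. Hence
$$
\D\d = pq\, n^2 + O\bigl(n(n\log n)^{1/2}\bigr),
$$
and taking the square root gives
$$
\sqrt{\D\d} = \sqrt{pq}\, n + O\bigl((n\log n)^{1/2}\bigr).
$$

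Dividing and multiplying by $n$, I would conclude
$$
\frac{\sqrt{\D\d}}{\D+\d}\, n = \frac{\sqrt{pq}\, n + O\bigl((n\log n)^{1/2}\bigr)}{n + O\bigl((n\log n)^{1/2}\bigr)}\cdot n = \sqrt{pq}\, n + O\bigl((n\log n)^{1/2}\bigr),
$$
which combined with (\ref{eq:c:Dal}) yields the claimed lower bound for almost every graph.

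The argument is essentially a routine substitution, so no single step presents a genuine obstacle; the only care required is in bookkeeping the Landau symbols, particularly when expanding the product $\D\d$ and then taking a square root. I would write $\sqrt{pq\,n^2 + O(n(n\log n)^{1/2})} = \sqrt{pq}\,n\sqrt{1+O((\log n/n)^{1/2})}$ and use the Taylor expansion of $\sqrt{1+x}$ at the origin to justify the absorbed error term, ensuring the final error is of order $(n\log n)^{1/2}$ rather than something larger.
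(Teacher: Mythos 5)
Your proposal is correct and follows essentially the same route as the paper: apply the lower bound in (\ref{eq:c:Dal}) and substitute the asymptotics of $\D$ and $\d$ from items $(2)$ and $(3)$, with the same bookkeeping $\D\d = pq\,n^2 + O\,(n(n\log n)^{1/2})$ and $\D+\d = n + o\,((n\log n)^{1/2})$ leading to $\sqrt{\D\d}\,n/(\D+\d) = \sqrt{pq}\,n + O\,((n\log n)^{1/2})$. The only (harmless) difference is that you record the error in $\D+\d$ as $O\,((n\log n)^{1/2})$ where the cancellation actually yields the sharper $o\,((n\log n)^{1/2})$ used in the paper; the conclusion is unaffected.
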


\begin{proof}
Let us consider the Erd\"os-R\'{e}nyi model $G(n, p)$.
Almost every graph $G$ satisfies
\begin{equation*}
\begin{split}
\frac{\sqrt{ \D\d}}{\D+\d} \, n
& = \frac{\sqrt{ pqn^2+ O\,(n( n\log n)^{1/2})}}{ n+o\,((n \log n)^{1/2})} \, n
= \frac{\sqrt{ pq}\, n + O\,(( n\log n)^{1/2})}{ n+o\,( (n \log n)^{1/2})} \, n
\\
& = \sqrt{ pq} \,n \Big( 1- \frac{o\,(( n\log n)^{1/2})}{n} \Big) + \frac{ O(( n\log n)^{1/2})}{ n+o\,( (n \log n)^{1/2})}\, n
\\
& = \sqrt{ pq} \,n + o\,( (n \log n)^{1/2}) + O\,((n \log n)^{1/2})
\\
& = \sqrt{ pq} \,n + O\,((n \log n)^{1/2}) .
\end{split}
\end{equation*}
This fact and (\ref{eq:c:Dal}) allow to obtain the result.
\end{proof}

\begin{corollary} \label{c:Dal2}
In the Erd\"os-R\'{e}nyi model $G(n, p)$, with $p = 1/2$, almost every graph $G$ satisfies
$$
R(G)
= \frac{n}{2} + O\,((n \log n)^{1/2}) .
$$
\end{corollary}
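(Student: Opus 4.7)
The plan is to sandwich $R(G)$ between two asymptotic bounds that, when specialized to $p=1/2$, coincide up to an error term of order $(n\log n)^{1/2}$. One side will come from the deterministic universal upper bound already available in the paper; the other from specializing Proposition \ref{c:Dal} to this particular probability.

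First, I would invoke the right-hand inequality in (\ref{eq:c:Dal}), which gives the deterministic bound $R(G)\le n/2$ for every graph $G$ with no isolated vertices. In particular this holds for almost every $G$ in $G(n,1/2)$, and can be rewritten as $R(G)-n/2\le 0$, which is trivially $O((n\log n)^{1/2})$ from above.

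Second, I would specialize Proposition \ref{c:Dal} to the case $p=q=1/2$. Since $\sqrt{pq}=1/2$, Proposition \ref{c:Dal} yields, for almost every $G$,
$$
R(G)\ge \frac{n}{2}+O\!\bigl((n\log n)^{1/2}\bigr),
$$
so that $R(G)-n/2$ is bounded below by $-C(n\log n)^{1/2}$ for some constant $C$, for all sufficiently large $n$.

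Combining these two bounds gives $|R(G)-n/2|=O((n\log n)^{1/2})$ for almost every graph in $G(n,1/2)$, which is exactly the statement of the corollary. There is essentially no obstacle in this proof: the whole work has already been done in Proposition \ref{c:Dal} and in the derivation of (\ref{eq:c:Dal}). The only minor point to verify is that the deterministic inequality $R(G)\le n/2$ is consistent with the Landau notation used in the statement, which is immediate since $0\in O((n\log n)^{1/2})$.
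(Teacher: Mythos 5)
Your proposal is correct and follows exactly the route the paper intends: the paper states Corollary \ref{c:Dal2} as an immediate consequence of Proposition \ref{c:Dal} (which with $p=q=1/2$ gives $R(G)\ge n/2+O((n\log n)^{1/2})$) together with the universal upper bound $R(G)\le n/2$ from (\ref{eq:c:Dal}). Nothing is missing; the only implicit point, which you handle correctly, is that almost every graph in $G(n,1/2)$ has no isolated vertices, so the deterministic upper bound applies.
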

In fact, this Corollary has already been computationally verified in Fig.~\ref{Fig1}.

\begin{proposition} \label{p:11}

Let $G$ be a graph with $n$ vertices, minimum degree $\d$ and maximum degree $\D$. Then
$$
\frac{n}{2} - \frac{1}{2\d^2} \big( M_1(G) - 2 M_2^{1/2}(G) \big)
\le R(G) \le \frac{n}{2} - \frac{1}{2\D^2} \big( M_1(G) - 2 M_2^{1/2}(G) \big) ,
$$
$$
 \frac{1}{2\D^2} \big( M_1(G) + 2 M_2^{1/2}(G)\big)- \frac{n}{2}
\le R(G) \le \frac{1}{2\d^2} \big( M_1(G) + 2 M_2^{1/2}(G) \big) - \frac{n}{2}.
$$
The equality is attained in each bound if and only if $G$ is a regular graph.
\end{proposition}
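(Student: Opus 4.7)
The plan is to exploit the algebraic identities
$$
(\sqrt{d_u}-\sqrt{d_v})^2 = (d_u+d_v) - 2\sqrt{d_u d_v},
\qquad
(\sqrt{d_u}+\sqrt{d_v})^2 = (d_u+d_v) + 2\sqrt{d_u d_v},
$$
so that, after summing over $uv\in E(G)$, the right-hand sides become $M_1(G)\mp 2M_2^{1/2}(G)$ respectively. The bridge from $R(G)$ to these expressions will be the pointwise identities
$$
\frac{1}{\sqrt{d_u d_v}}=\frac{d_u+d_v}{2d_u d_v}-\frac{(\sqrt{d_u}-\sqrt{d_v})^2}{2d_u d_v},
\qquad
\frac{1}{\sqrt{d_u d_v}}=\frac{(\sqrt{d_u}+\sqrt{d_v})^2}{2d_u d_v}-\frac{d_u+d_v}{2d_u d_v},
$$
both of which are checked by clearing denominators. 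The first identity will drive the top pair of inequalities, the second identity the bottom pair.

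Next I would sum each identity over the edges. The ``neutral'' term produces
$$
\sum_{uv\in E(G)}\frac{d_u+d_v}{2d_u d_v}=\frac{1}{2}\sum_{uv\in E(G)}\Big(\frac{1}{d_u}+\frac{1}{d_v}\Big)=\frac{1}{2}\sum_{u\in V(G)} d_u\cdot\frac{1}{d_u}=\frac{n}{2},
$$
which is precisely the computation already performed in the excerpt just after (\ref{eq:Dal}). Thus
$$
R(G)=\frac{n}{2}-\sum_{uv\in E(G)}\frac{(\sqrt{d_u}-\sqrt{d_v})^2}{2 d_u d_v},
\qquad
R(G)=\sum_{uv\in E(G)}\frac{(\sqrt{d_u}+\sqrt{d_v})^2}{2 d_u d_v}-\frac{n}{2}.
$$

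From here the inequalities follow by the trivial estimate $\d^2\le d_u d_v\le \D^2$. Replacing $d_u d_v$ by $\D^2$ (resp.\ $\d^2$) in the denominators enlarges (resp.\ shrinks) each nonnegative summand, turning the last two displayed equalities into the four claimed bounds after recognizing $\sum_{uv}(\sqrt{d_u}\pm\sqrt{d_v})^2=M_1(G)\pm 2M_2^{1/2}(G)$.

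For the equality discussion, I would argue uniformly: in, say, the lower bound of the first line, equality forces
$$
\sum_{uv\in E(G)}(\sqrt{d_u}-\sqrt{d_v})^2\Big(\frac{1}{\delta^2}-\frac{1}{d_u d_v}\Big)=0,
$$
a sum of nonnegative terms, so every term vanishes. Each edge therefore satisfies either $d_u=d_v$ or $d_u d_v=\d^2$; the latter already forces $d_u=d_v=\d$, so $G$ is regular in either case. The same argument applies to the remaining three bounds (for the upper bounds one uses $1/d_u d_v \le 1/\d^2$ or $1/d_u d_v\ge 1/\D^2$ with the appropriate sign). Conversely, if $G$ is $k$-regular then $M_1(G)=nk^2$, $M_2^{1/2}(G)=mk=nk^2/2$ and $R(G)=n/2$, making all four bounds attain equality. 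The only step requiring care is this equality analysis; the analytic content of the proposition is completely captured by the two algebraic identities above, so there is no serious obstacle.
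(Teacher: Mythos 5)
Your proof is correct and follows essentially the same route as the paper's: the paper likewise rests on the identity $R(G)=\frac n2-\frac12\sum_{uv\in E(G)}\big(\sqrt{d_u}-\sqrt{d_v}\,\big)^2/(d_ud_v)$ (quoted from Das--Balachandran--Gutman rather than rederived, as you do) together with its companion with the plus sign, and then applies $\delta^2\le d_ud_v\le\Delta^2$ and the expansion $\sum_{uv}(\sqrt{d_u}\pm\sqrt{d_v}\,)^2=M_1(G)\pm 2M_2^{1/2}(G)$. In the equality analysis you are, if anything, slightly more careful than the paper in noting that each edge only yields ``$d_u=d_v$ or $d_ud_v=\delta^2$''; both you and the paper then pass from ``$d_u=d_v$ on every edge'' to ``$G$ is regular,'' which strictly requires connectedness, since a disjoint union of regular components of different degrees also attains equality in the first lower bound.
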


\begin{proof}
In the argument in the proof of \cite[Theorem 1]{DBG} appears the following relation:
\begin{equation} \label{eq:p11}
R(G)
= \frac{n}2 - \frac{1}2 \!\! \sum_{uv \in E(G)} \!\! \frac{\big( \sqrt{d_u} - \sqrt{d_v} \,\big)^2}{d_ud_v} \,,
\end{equation}
and we deduce
$$
\frac{n}2 - \frac{1}{2\d^2} \!\! \sum_{uv \in E(G)} \!\! \big( \sqrt{d_u} - \sqrt{d_v} \,\big)^2
\le
R(G)
\le \frac{n}2 - \frac{1}{2\D^2} \!\! \sum_{uv \in E(G)} \!\! \big( \sqrt{d_u} - \sqrt{d_v} \,\big)^2 .
$$
Since
$$
\sum_{uv \in E(G)} \!\! \big( \sqrt{d_u} - \sqrt{d_v} \,\big)^2
= \sum_{uv \in E(G)} \!\! ( d_u + d_v) - 2  \!\!\! \sum_{uv \in E(G)} \!\!\sqrt{d_ud_v}
= M_1(G) - 2 M_2^{1/2}(G),
$$
we obtain the first and second inequalities.

\medskip

Since
\begin{equation*}
\begin{split}
-n + \sum_{uv \in E(G)} \!\! & \frac{\big( \sqrt{d_u} + \sqrt{d_v} \,\big)^2}{d_ud_v}
= -\sum_{u \in V(G)} \!\! d_u\,\frac{1}{d_u} +\sum_{uv \in E(G)} \!\! \frac{\big( \sqrt{d_u} + \sqrt{d_v} \,\big)^2}{d_ud_v}
\\
& = -\sum_{uv \in E(G)} \!\! \Big( \frac{1}{d_u} + \frac{1}{d_v} \,\Big) +\sum_{uv \in E(G)} \!\! \frac{d_u +d_v + 2\sqrt{d_ud_v} }{d_ud_v}
\\
& = -\sum_{uv \in E(G)} \!\! \frac{d_u +d_v}{d_ud_v}  +\sum_{uv \in E(G)} \!\! \frac{d_u +d_v + 2\sqrt{d_ud_v} }{d_ud_v}
\\
& = \sum_{uv \in E(G)} \!\! \frac{2\sqrt{d_ud_v}}{d_ud_v}
= 2\,R(G),
\end{split}
\end{equation*}

we have
$$ R(G) = -\frac{n}2 + \frac{1}2 \!\!\sum_{uv \in E(G)} \!\! \frac{\big( \sqrt{d_u} + \sqrt{d_v} \,\big)^2}{d_ud_v} \,,$$
$$
-\frac{n}2 + \frac{1}{2\D^2} \!\! \sum_{uv \in E(G)} \!\! \big( \sqrt{d_u} + \sqrt{d_v} \,\big)^2
\le
R(G) \le -\frac{n}2 + \frac{1}{2\d^2} \!\! \sum_{uv \in E(G)} \!\! \big( \sqrt{d_u} + \sqrt{d_v} \,\big)^2 .
$$
Since
$$
\sum_{uv \in E(G)} \!\! \big( \sqrt{d_u} + \sqrt{d_v} \,\big)^2
= \sum_{uv \in E(G)} \!\! ( d_u + d_v) + 2  \!\!\! \sum_{uv \in E(G)} \!\!\sqrt{d_ud_v}
= M_1(G) + 2 M_2^{1/2}(G),
$$
we obtain the third and forth inequalities.

\smallskip

If $G$ is a regular graph, then $\d=\D$ and, in each line, the lower and upper bounds are the same, and they are equal to $R(G)$.

If the equality is attained in some bound, then we have either
$d_ud_v=\d^2$ for every $uv \in E(G)$
or $d_ud_v=\D^2$ for every $uv \in E(G)$.
Thus, we have either
$d_u=\d$ for every $u \in V(G)$
or $d_u=\D$ for every $u \in V(G)$,
and so, the graph is regular.
\end{proof}

\medskip

Proposition \ref{p:11} has the following consequence on random graphs.

\begin{corollary} \label{c:11}
In the Erd\"os-R\'{e}nyi model $G(n, p)$, with $p \in [1/2,1)$ and $q = 1 - p$, almost every graph $G$ satisfies
$$
q^2n^2 + O\,(n^{3/2}( \log n)^{1/2})
\le \frac{M_1(G) - 2 M_2^{1/2}(G)}{n-2R(G)}
\le p^2n^2 + O\,(n^{3/2}( \log n)^{1/2}) ,
$$
$$
q^2n^2 + O\,(n^{3/2}( \log n)^{1/2})
\le \frac{M_1(G) + 2 M_2^{1/2}(G)}{n+2R(G)}
\le p^2n^2 + O\,(n^{3/2}( \log n)^{1/2}) .
$$
\end{corollary}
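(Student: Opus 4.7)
The plan is to invert the bounds of Proposition \ref{p:11} so that the two ratios in the statement are sandwiched between $\d^2$ and $\D^2$, and then substitute the asymptotic estimates of items $(2)$ and $(3)$ to pass from $\d,\D$ to the parameters $q,p$.

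First, rewriting the first pair of inequalities in Proposition \ref{p:11} in the equivalent form
$$
\frac{1}{\D^2}\bigl(M_1(G) - 2M_2^{1/2}(G)\bigr) \le n - 2R(G) \le \frac{1}{\d^2}\bigl(M_1(G) - 2M_2^{1/2}(G)\bigr),
$$
and using that $M_1(G) - 2M_2^{1/2}(G) = \sum_{uv\in E(G)}(\sqrt{d_u}-\sqrt{d_v})^2 \ge 0$, I would divide through by $n-2R(G)$ to obtain
$$
\d^2 \le \frac{M_1(G) - 2M_2^{1/2}(G)}{n - 2R(G)} \le \D^2.
$$
The division is justified because, by the equality case of Proposition \ref{p:11}, $n - 2R(G) > 0$ unless $G$ is regular, and almost every graph in $G(n,p)$ with $p \in [1/2,1)$ is non-regular (the degree fluctuations in items $(2)$ and $(3)$ prevent $\d = \D$). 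The analogous rearrangement of the second pair of inequalities of Proposition \ref{p:11} yields
$$
\d^2 \le \frac{M_1(G) + 2M_2^{1/2}(G)}{n + 2R(G)} \le \D^2.
$$

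The second step is to expand $\d^2$ and $\D^2$ asymptotically. Writing $\d = qn + \varepsilon_n$ with $\varepsilon_n = -q - (2pqn\log n)^{1/2} + o((n\log n)^{1/2}) = O((n\log n)^{1/2})$ from item $(3)$, squaring gives
$$
\d^2 = q^2 n^2 + 2qn\,\varepsilon_n + \varepsilon_n^2 = q^2 n^2 + O(n^{3/2}(\log n)^{1/2}),
$$
and the analogous expansion of item $(2)$ gives $\D^2 = p^2 n^2 + O(n^{3/2}(\log n)^{1/2})$. Substituting these into the two sandwich inequalities reproduces the four bounds claimed in the corollary.

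The main obstacle is purely bookkeeping: one has to keep the error orders consistent across the squaring step (the cross term $2qn\,\varepsilon_n$ is what generates the $n^{3/2}(\log n)^{1/2}$ rate; the quadratic term $\varepsilon_n^2 = O(n\log n)$ is absorbed into it), and one has to certify that $n \pm 2R(G)$ stays bounded away from zero for almost every graph so that dividing in the first step is legitimate. Beyond that the argument is essentially algebraic, mirroring the asymptotic manipulation already carried out in the proof of Corollary \ref{c:gam}.
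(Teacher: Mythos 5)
Your proposal is correct and follows essentially the same route as the paper: rewrite Proposition \ref{p:11} as the sandwich $\d^2 \le (M_1(G) \mp 2M_2^{1/2}(G))/(n \mp 2R(G)) \le \D^2$ and then square the asymptotic expansions of $\d$ and $\D$ from items $(2)$ and $(3)$ to get the $O(n^{3/2}(\log n)^{1/2})$ error term. Your extra remark justifying the division (that $n-2R(G)>0$ for almost every, hence non-regular, graph) is a small point the paper leaves implicit, but it does not change the argument.
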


\begin{proof}
Proposition \ref{p:11} gives
\begin{equation*}
\begin{aligned}
&\d^2
& \le  \frac{M_1(G) - 2 M_2^{1/2}(G)}{n-2R(G)}
\le \D^2 ,
\\
&\d^2
& \le  \frac{M_1(G) + 2 M_2^{1/2}(G)}{n+2R(G)}
\le \D^2 .
\end{aligned}
\end{equation*}
Items $(2)$ and $(3)$ give for almost every graph
\begin{equation*}
\begin{aligned}
&\D^2
& = \big( pn + O\,(( n\log n)^{1/2}) \big)^2
= p^2n^2 + O\,(n^{3/2}( \log n)^{1/2}) ,
\\
&\d^2
& = \big( qn + O\,(( n\log n)^{1/2}) \big)^2
= q^2n^2 + O\,(n^{3/2}( \log n)^{1/2}) .
\end{aligned}
\end{equation*}

These facts give the desired inequalities.
\end{proof}

\medskip

The following proposition is a consequence of (\ref{eq:p11}) in \cite{DBG}.

\begin{proposition} \label{p:111}
Let $G$ be a graph with  $m$ edges, $n$ vertices, minimum degree $\d$ and maximum degree $\D$. Then  then
$$
R(G) \ge \frac{n}{2} - \frac{m}{2} \Big( \frac{1}{\sqrt{\d}} - \frac{1}{\sqrt{\D}} \Big)^2,
$$
and the equality is attained if and only if $G$ is a regular or biregular graph.
\end{proposition}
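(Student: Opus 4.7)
The plan is to start from the identity (\ref{eq:p11}) established in the proof of Proposition \ref{p:11}, namely
$$
R(G) = \frac{n}{2} - \frac{1}{2}\sum_{uv \in E(G)} \frac{\big(\sqrt{d_u} - \sqrt{d_v}\big)^2}{d_u d_v},
$$
and to reduce the proposition to an edgewise estimate. The key algebraic observation is that the summand can be rewritten as
$$
\frac{\big(\sqrt{d_u} - \sqrt{d_v}\big)^2}{d_u d_v} = \Big(\frac{1}{\sqrt{d_u}} - \frac{1}{\sqrt{d_v}}\Big)^2,
$$
which is the form in which the maximum and minimum degree bounds act most transparently.

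Next, I would use the constraint $\d \le d_u, d_v \le \D$, which gives $1/\sqrt{\D} \le 1/\sqrt{d_u}, 1/\sqrt{d_v} \le 1/\sqrt{\d}$. Because both values lie in the interval $[1/\sqrt{\D}, 1/\sqrt{\d}]$, their difference is bounded in absolute value by the length of the interval, so
$$
\Big(\frac{1}{\sqrt{d_u}} - \frac{1}{\sqrt{d_v}}\Big)^2 \le \Big(\frac{1}{\sqrt{\d}} - \frac{1}{\sqrt{\D}}\Big)^2
$$
for every edge $uv \in E(G)$. Summing over the $m$ edges and plugging into (\ref{eq:p11}) yields
$$
R(G) \ge \frac{n}{2} - \frac{m}{2}\Big(\frac{1}{\sqrt{\d}} - \frac{1}{\sqrt{\D}}\Big)^2,
$$
which is the claimed inequality.

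For the equality characterization, the edgewise bound is tight for a given edge $uv$ if and only if $\{1/\sqrt{d_u}, 1/\sqrt{d_v}\} = \{1/\sqrt{\d}, 1/\sqrt{\D}\}$, i.e., $\{d_u, d_v\} = \{\d, \D\}$. Thus equality in the summed bound forces every edge to connect a vertex of degree $\d$ with a vertex of degree $\D$. If $\d = \D$ this says $G$ is regular, and if $\d < \D$ this says $G$ is biregular, giving the stated characterization. The only non-mechanical step is recognising the rewriting of the summand in terms of $1/\sqrt{d_u} - 1/\sqrt{d_v}$; once that is in hand, the rest is a direct monotonicity argument, so I do not anticipate a genuine obstacle.
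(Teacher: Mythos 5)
Your proof is correct and follows essentially the same route as the paper: both rewrite the identity (\ref{eq:p11}) with summands $\bigl(1/\sqrt{d_u}-1/\sqrt{d_v}\bigr)^2$, bound each term by $\bigl(1/\sqrt{\d}-1/\sqrt{\D}\bigr)^2$, and characterize equality by $\{d_u,d_v\}=\{\d,\D\}$ on every edge. No gaps.
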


\begin{proof}
Equation (\ref{eq:p11}) can be written as
$$
R(G)
= \frac{n}2 - \frac{1}2\sum_{uv \in E(G)} \!\! \Big( \frac{1}{\sqrt{d_u}} - \frac{1}{\sqrt{d_v}} \Big)^2,
$$
and so,
$$
R(G)
\ge \frac{n}2 - \frac{1}2\sum_{uv \in E(G)} \!\! \Big( \frac{1}{\sqrt{\d}} - \frac{1}{\sqrt{\D}} \Big)^2
= \frac{n}{2} - \frac{m}{2} \Big( \frac{1}{\sqrt{\d}} - \frac{1}{\sqrt{\D}} \Big)^2.
$$

The equality is attained if and only if
$\{d_u,d_v\}=\{\d,\D\}$ for every $uv \in E(G)$, i.e.,
$G$ is a regular or biregular graph.
\end{proof}

\medskip

Note that the lower bound in Proposition \ref{p:111} is not comparable with the one in Corollary \ref{c:Dal}, as the following examples show:

If $G$ is the path graph with $n$ vertices, then
$$
\frac{n}{2} - \frac{m}{2} \Big( \frac{1}{\sqrt{\d}} - \frac{1}{\sqrt{\D}} \Big)^2
=\frac{n}{2} - \frac{n-1}{2} \Big( 1 - \frac{1}{\sqrt{2}} \Big)^2
\approx \frac{2\sqrt{2}+1}4 \, n
$$
is larger than
$$
\frac{\sqrt{ \D\d}}{\D+\d} \, n
=\frac{\sqrt{ 2}}{3} \, n ,
$$
for large enough $n$.
However, if $G$ is the complete graph with $n-1$ vertices $K_{n-1}$ with an additional edge joining a vertex of $K_{n-1}$ with an additional vertex of degree $1$, then
$$
\frac{\sqrt{ \D\d}}{\D+\d} \, n
=\frac{\sqrt{ n-1}}{n} \, n
= \sqrt{ n-1}
$$
is larger than
$$
\frac{n}{2} - \frac{m}{2} \Big( \frac{1}{\sqrt{\d}} - \frac{1}{\sqrt{\D}} \Big)^2
=\frac{n}{2} - \frac{\frac12 (n-1)(n-2) + 1}{2} \Big( 1 - \frac{1}{\sqrt{n-1}} \Big)^2 \,,
$$
for large enough $n$.

\begin{proposition} \label{c:111}
In the Erd\"os-R\'{e}nyi model $G(n, p)$, with $p \in [1/2,1)$ and $q = 1 - p$, almost every graph $G$ satisfies
\begin{equation*}
\begin{aligned}
&R(G) \ge \frac{2\sqrt{pq} + 2q-1}{4q} \, n + o\,( n), & \qquad \mbox{ if } \ \  p>1/2,\\
&R(G)= \frac{n}{2} + O\,((n \log n)^{1/2}) , & \qquad \mbox{ if } \ \ p=1/2.
\end{aligned}
\end{equation*}
\end{proposition}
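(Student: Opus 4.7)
The plan is to use Proposition \ref{p:111} as the starting inequality and then substitute the almost-sure asymptotic expansions of $m$, $\d$, $\D$ from items (1)--(3), tracking error terms carefully and simplifying the resulting expression with the identity $p+q=1$.

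Concretely, Proposition \ref{p:111} gives
$$R(G) \ge \frac{n}{2} - \frac{m}{2}\Big(\frac{1}{\sqrt{\d}} - \frac{1}{\sqrt{\D}}\Big)^{\!2},$$
so it suffices to estimate $\tfrac{m}{2}(1/\sqrt{\d} - 1/\sqrt{\D})^2$ from above for almost every $G$. Items (2) and (3) let us write $\d = qn(1+\varepsilon_-)$ and $\D = pn(1+\varepsilon_+)$ with $\varepsilon_{\pm} = O((\log n/n)^{1/2})$, so that a first-order expansion of $x \mapsto x^{-1/2}$ yields
$$\frac{1}{\sqrt{\d}} - \frac{1}{\sqrt{\D}} = \frac{1}{\sqrt{n}}\Big(\frac{1}{\sqrt{q}} - \frac{1}{\sqrt{p}}\Big) + O\Big(\frac{(\log n)^{1/2}}{n}\Big),$$
and squaring gives an expression of order $1/n$ plus an error of order $(\log n)^{1/2}/n^{3/2}$. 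Multiplying by $m/2 = pn^2/4 + o(n^2)$ from item (1), the cross terms absorb into $o(n)$, leaving
$$\frac{m}{2}\Big(\frac{1}{\sqrt{\d}} - \frac{1}{\sqrt{\D}}\Big)^{\!2} = \frac{pn}{4}\Big(\frac{1}{\sqrt{q}} - \frac{1}{\sqrt{p}}\Big)^{\!2} + o(n).$$
The remaining work is algebraic: expanding the square and applying $p+q=1$ gives
$$\frac{n}{2} - \frac{pn}{4}\Big(\frac{1}{q} - \frac{2}{\sqrt{pq}} + \frac{1}{p}\Big) = \frac{n}{4q}\big(q - p + 2\sqrt{pq}\big) = \frac{2\sqrt{pq} + 2q - 1}{4q}\,n,$$
which matches the claimed bound for $p>1/2$.

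For the boundary case $p = 1/2$, the cleanest route is simply to invoke Corollary \ref{c:Dal2}, which already states $R(G) = n/2 + O((n\log n)^{1/2})$; alternatively, one can note that when $p = q = 1/2$ the leading term $(1/\sqrt{q} - 1/\sqrt{p})^{\!2}$ vanishes, so the above estimate gives $R(G) \ge n/2 + O(\log n)$, and this is combined with the universal upper bound $R(G) \le n/2$ from \eqref{eq:c:Dal}. The main obstacle I expect is the error bookkeeping: the cross term in squaring $1/\sqrt{\d} - 1/\sqrt{\D}$ interacts with the $O(n^2)$ leading part of $m$, and one must check that the product is genuinely $o(n)$ rather than something larger; this is precisely where the $(n \log n)^{1/2}$ rate in items (2)--(3), coupled with the $o(n^2)$ rate in item (1), is comfortably strong enough.
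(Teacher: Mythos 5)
Your proposal is correct and follows essentially the same route as the paper: both start from Proposition \ref{p:111}, substitute the asymptotics of items (1)--(3), and dispose of the case $p=1/2$ via Corollary \ref{c:Dal2}. The only (cosmetic) difference is that you expand $1/\sqrt{\d}$ and $1/\sqrt{\D}$ separately to first order before squaring, whereas the paper rewrites $\big(1/\sqrt{\d}-1/\sqrt{\D}\,\big)^2$ as $(\D+\d-2\sqrt{\D\d})/(\D\d)$ and substitutes the asymptotics of $\D+\d$ and $\D\d$ directly; your error bookkeeping and the final algebra both check out.
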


\begin{proof}
The second statement follows from Corollary \ref{c:Dal2}.

Assume now $p>1/2$.
Items $(2)$ and $(3)$ give that in the Erd\"os-R\'{e}nyi model $G(n, p)$,
almost every graph $G$ satisfies
\begin{equation*}
\begin{split}
\Big( \frac{1}{\sqrt{\d}} - \frac{1}{\sqrt{\D}} \Big)^2
& = \frac{\D+\d-2\sqrt{\D\d}}{\D\d} 
\\
& = \frac{n + o\,( (n \log n)^{1/2})-2\sqrt{pq n^2 + O\,(n( n\log n)^{1/2})}}{pq n^2 + O\,(n( n\log n)^{1/2})}
\\
& = \frac{n + o\,( (n \log n)^{1/2})-2\sqrt{pq}\, n + O\,(( n\log n)^{1/2})}{pq n^2 + O\,(n( n\log n)^{1/2})}
\\
& = \frac{\big(1-2\sqrt{pq}\,\big)n + O\,(( n\log n)^{1/2})}{pq n^2 + O\,(n( n\log n)^{1/2})}
\\
& = \frac{1-2\sqrt{pq}}{pq n} + o\,\Big( \frac1n \Big) .
\end{split}
\end{equation*}
This fact, Proposition \ref{p:111} and item $(1)$ give
\begin{equation*}
\begin{split}
R(G)
& \ge  \frac{n}2 - \frac{m}2 \Big( \frac{1}{\sqrt{\d}} - \frac{1}{\sqrt{\D}} \Big)^2
\\
& = \frac{n}2 - \frac12 \Big( \frac{p n(n-1)}{2} + o\,\big( n^2 \big) \Big) \Big( \frac{1-2\sqrt{pq}}{pq n} + o\,\Big( \frac1n \Big) \Big)
\\
& = \frac{1}2\,n - \frac{1-2\sqrt{pq}}{4q} \, n + o\,( n)
= \frac{2\sqrt{pq} + 2q-1}{4q} \, n + o\,( n) .
\end{split}
\end{equation*}
\end{proof}

The \emph{misbalance rodeg index} is defined as
$$
M\!R\,(G) = \sum_{uv\in E(G)} \big| \sqrt{d_u} - \sqrt{d_v} \, \big| .
$$
This is a significant predictor of enthalpy of vaporization
and of standard enthalpy of vaporization for octane
isomers (see \cite{V1}).

\begin{theorem} \label{t:13}
Let $G$ be a graph with maximum degree $\D$ and $m$ edges. Then
$$
R(G)
\le \frac{n}2 - \frac1{2\D^2 m} \, M\!R (G)^2,
$$
and the equality is attained if and only if $G$ is regular.
\end{theorem}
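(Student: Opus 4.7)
The plan is to start from the identity (\ref{eq:p11}) recalled in the paper, namely
$$
R(G) = \frac{n}{2} - \frac{1}{2}\sum_{uv\in E(G)} \frac{(\sqrt{d_u}-\sqrt{d_v})^2}{d_ud_v},
$$
so that the claimed upper bound is equivalent to the lower bound
$$
\sum_{uv\in E(G)} \frac{(\sqrt{d_u}-\sqrt{d_v})^2}{d_ud_v} \;\ge\; \frac{1}{\Delta^2 m}\,M\!R(G)^2.
$$
In this form the inequality is begging for a Cauchy--Schwarz argument with the split $|\sqrt{d_u}-\sqrt{d_v}| = \frac{|\sqrt{d_u}-\sqrt{d_v}|}{\sqrt{d_ud_v}}\cdot \sqrt{d_ud_v}$.

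Concretely, I would apply Cauchy--Schwarz in the form
$$
M\!R(G)^2 = \Bigl(\sum_{uv\in E(G)}\frac{|\sqrt{d_u}-\sqrt{d_v}|}{\sqrt{d_ud_v}}\cdot \sqrt{d_ud_v}\,\Bigr)^{\!2} \le \Bigl(\sum_{uv\in E(G)} \frac{(\sqrt{d_u}-\sqrt{d_v})^2}{d_ud_v}\Bigr)\Bigl(\sum_{uv\in E(G)} d_ud_v\Bigr),
$$
and then use the elementary bound $d_ud_v\le \Delta^2$ for each edge, which gives $\sum_{uv\in E(G)} d_ud_v \le \Delta^2 m$. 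Combining the two yields
$$
\sum_{uv\in E(G)} \frac{(\sqrt{d_u}-\sqrt{d_v})^2}{d_ud_v} \;\ge\; \frac{M\!R(G)^2}{\Delta^2 m},
$$
and plugging this back into the identity above produces exactly the stated inequality.

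For the equality discussion, I would argue as follows. If $G$ is regular, every summand $(\sqrt{d_u}-\sqrt{d_v})^2$ vanishes and one checks directly from the identity that $R(G)=n/2$, while $M\!R(G)=0$, so the bound is attained. Conversely, equality in the final bound forces simultaneous equality in the Cauchy--Schwarz step and in the per-edge bound $d_ud_v\le \Delta^2$. The second condition forces $d_u d_v = \Delta^2$, hence $d_u=d_v=\Delta$, for every edge $uv\in E(G)$; since by our standing assumption each connected component has at least one edge (no isolated vertices), every vertex has degree $\Delta$ and $G$ is regular.

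The argument is essentially a one-line Cauchy--Schwarz once the identity (\ref{eq:p11}) is in hand, so there is no serious obstacle; the only care needed is in the equality analysis, specifically in observing that the bound $d_ud_v\le\Delta^2$ must be tight on \emph{every} edge (not merely on average), and then invoking the hypothesis that there are no isolated vertices to conclude that $G$ is regular globally, not just on the edge-set.
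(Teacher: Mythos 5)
Your proof is correct and takes essentially the same route as the paper's: both start from the identity (\ref{eq:p11}), apply Cauchy--Schwarz to $M\!R(G)$, and invoke $d_ud_v\le\Delta^2$, the only (cosmetic) difference being that you place the weights $\sqrt{d_ud_v}$ inside the Cauchy--Schwarz step, whereas the paper uses weights $1$ and applies $d_ud_v\le\Delta^2$ termwise afterwards. One shared caveat: in the converse equality analysis, when $M\!R(G)=0$ every inequality in the chain is already an equality without forcing $d_ud_v=\Delta^2$ on any edge, so equality is also attained by a disjoint union of regular components of different degrees (e.g.\ $K_3\sqcup K_2$, where $R(G)=n/2$ and $M\!R(G)=0$) --- a blind spot your argument inherits from the paper's own proof, and which disappears if one assumes $G$ connected.
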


\begin{proof}
By Cauchy-Schwarz inequality we have
\begin{equation*}
\begin{split}
M\!R\,(G)^2 & =
\Big( \sum_{uv \in E(G)} \!\! \big| \sqrt{d_u} - \sqrt{d_v} \,\big|\Big)^2
\\
& \le \Big( \sum_{uv \in E(G)} \!\! 1^2 \Big)
\sum_{uv \in E(G)} \!\! \big( \sqrt{d_u} - \sqrt{d_v} \,\big)^2
\le m \!\!\!\! \sum_{uv \in E(G)} \!\! \big( \sqrt{d_u} - \sqrt{d_v} \,\big)^2.
\end{split}
\end{equation*}
Hence, (\ref{eq:p11}) gives
\begin{equation*}
\begin{split}
R(G)
& = \frac{n}2 - \frac{1}2\sum_{uv \in E(G)} \!\! \frac{\big( \sqrt{d_u} - \sqrt{d_v} \,\big)^2}{d_ud_v}
\\
& \le \frac{n}2 - \frac{1}{2\D^2}\sum_{uv \in E(G)} \!\! \big( \sqrt{d_u} - \sqrt{d_v} \,\big)^2
\le \frac{n}2 - \frac1{2\D^2 m} \, M\!R\,(G)^2 .
\end{split}
\end{equation*}
\indent
If $G$ is regular, then $R(G) = n/2$ and $M\!R\,(G) = 0$ and so, the equality is attained.

If the equality is attained, then $d_u d_v = \D^2$ for every $uv \in E(G)$;
thus, $d_u = \D$ for all $u \in V(G)$ and so, $G$ is a regular graph.
\end{proof}

\begin{corollary} \label{c:13}
In the Erd\"os-R\'{e}nyi model $G(n, p)$, with $p \in [1/2,1)$ and $q = 1 - p$, almost every graph $G$ satisfies
$$
\frac{M\!R (G)^2}{n-2R(G)}
\le \frac12\,p^2 n^3 + o\,(n^3) .
$$
\end{corollary}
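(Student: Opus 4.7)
The starting point is Theorem \ref{t:13}, which is a purely deterministic inequality: for every graph with $m$ edges and maximum degree $\Delta$,
$$
R(G) \le \frac{n}{2} - \frac{1}{2\Delta^2 m}\, MR(G)^2.
$$
My first step is to clear denominators and isolate the ratio that appears in the corollary. Multiplying through by $2\Delta^2 m$ and rearranging gives $MR(G)^2 \le \Delta^2 m\,(n-2R(G))$, and since the right-hand factor $n-2R(G)$ is nonnegative by the same inequality, division yields the deterministic bound
$$
\frac{MR(G)^2}{n-2R(G)} \le \Delta^2 m.
$$
At this point the task is purely probabilistic: I need to estimate $\Delta^2 m$ for almost every graph in the Erd\"os-R\'enyi model $G(n,p)$.

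The plan for that estimate is to invoke items (1) and (2) from the enumerated list of almost-sure properties of $G(n,p)$. From item~(2), $\Delta = p(n-1) + (2pqn\log n)^{1/2} + o((n\log n)^{1/2})$, so squaring gives $\Delta^2 = p^2 n^2 + O(n^{3/2}(\log n)^{1/2})$. From item~(1), $m = pn(n-1)/2 + o(n^2)$. I would then multiply these two expansions term by term, collecting the dominant contribution and absorbing the cross-terms into a single Landau error. This mirrors the asymptotic arithmetic already performed in the proofs of Corollaries~\ref{c:gam} and \ref{c:11}, so no new idea is needed for the bookkeeping step itself.

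The conceptual content of the proof is thus very small: the algebraic rearrangement of Theorem~\ref{t:13} is immediate, and the probabilistic input is exactly the two facts already used repeatedly in the preceding corollaries. The only delicate step I anticipate is the careful tracking of the $O$ and $o$ remainders when multiplying the estimates for $\Delta^2$ and $m$, to make sure that all sub-leading terms can be uniformly absorbed into the remainder announced in the statement. Once that bookkeeping is done, combining the deterministic inequality with the asymptotic estimate for $\Delta^2 m$ yields the corollary.
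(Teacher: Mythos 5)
Your proposal follows the paper's proof step for step: the paper likewise rearranges Theorem \ref{t:13} into the deterministic bound $M\!R(G)^2/(n-2R(G)) \le \D^2 m$ and then substitutes items $(1)$ and $(2)$. There is, however, one place where the ``bookkeeping'' you defer actually matters, and where you should not trust the target expression. With $\D^2 = p^2n^2 + O(n^{3/2}(\log n)^{1/2})$ and $m = \frac{p\,n(n-1)}{2} + o(n^2)$, the leading term of the product is
$$
\D^2 m = p^2 n^2 \cdot \frac{p\, n^2}{2} + \dots = \frac12\, p^3 n^4 + o(n^4),
$$
not $\frac12\, p^2 n^3 + o(n^3)$. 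The paper's own proof asserts that
$\big( pn + O((n\log n)^{1/2})\big)^2 \big( \frac{pn(n-1)}{2} + o(n^2)\big) = \frac12 p^2 n^3 + o(n^3)$, which is an arithmetic slip (a factor of order $pn$ is dropped). Carrying out your plan honestly therefore proves the inequality with right-hand side $\frac12\, p^3 n^4 + o(n^4)$, and the sub-leading terms cannot be ``absorbed into the remainder announced in the statement'' as you hope, because the announced remainder $o(n^3)$ is of strictly smaller order than the true main term. So your route is exactly the intended one, but the final multiplication exposes a defect of the statement itself: either its exponent and constant must be corrected to $\frac12 p^3 n^4 + o(n^4)$, or an estimate genuinely sharper than $\D^2 m$ is required. (A separate, minor point shared with the paper: to divide by $n-2R(G)$ you need it to be strictly positive, which fails for regular graphs; this is implicitly assumed whenever the ratio is written down.)
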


\begin{proof}
Theorem \ref{t:13} gives the inequality
$$
\frac{M\!R (G)^2}{n-2R(G)}
\le \D^2 m .
$$
Items $(1)$ and $(2)$ give that in the Erd\"os-R\'{e}nyi model $G(n, p)$,
almost every graph $G$ satisfies
\begin{equation*}
\begin{split}
\D^2 m
& = \big( p n + O\,(( n\log n)^{1/2}) \big)^2 \Big( \frac{p n(n-1)}{2} + o\,\big( n^2 \big) \Big)
\\
& = \frac12\,p^2 n^3 + o\,(n^3) ,
\end{split}
\end{equation*}
and this gives the desired inequality.
\end{proof}

\medskip

The following Sz\"okefalvi Nagy inequality appears in \cite{SN} (see also \cite{SGK}).

\begin{lemma} \label{l:Szokefalvi Nagy}
If $a_j \ge 0$ for $1\le j \le k$, $R= \max_j a_j$ and $r= \min_j a_j$, then
$$
k\sum_{j=1}^k a_j^2 - \Big( \sum_{j=1}^k a_j \Big)^2
\ge \frac{k}{2} \, (R - r)^2 .
$$
\end{lemma}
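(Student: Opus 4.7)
The plan is to recognize the left-hand side as (up to a factor of $k$) the sample variance of the data $a_1,\ldots,a_k$. Writing $\bar a = \frac{1}{k}\sum_{j=1}^k a_j$ for the arithmetic mean, a direct expansion gives the well-known identity
$$
k\sum_{j=1}^k a_j^2 - \Big( \sum_{j=1}^k a_j \Big)^2 = k\sum_{j=1}^k (a_j - \bar a)^2,
$$
so the lemma is equivalent to $\sum_{j=1}^k (a_j - \bar a)^2 \ge \frac{1}{2}(R-r)^2$. Establishing this reformulation is the pivotal step; afterwards the proof exploits the fact that the mere presence of an extremal pair of data points forces a definite amount of spread around the mean.

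Next I would pick indices $p$ and $q$ with $a_p = R$ and $a_q = r$. The case $R = r$ is trivial because both sides of the desired inequality vanish, so I may assume $R > r$ and hence $p \ne q$. Discarding all other (non-negative) summands then gives
$$
\sum_{j=1}^k (a_j - \bar a)^2 \ge (R - \bar a)^2 + (\bar a - r)^2.
$$
Since $\bar a \in [r, R]$, the two non-negative quantities $R - \bar a$ and $\bar a - r$ add up to exactly $R - r$. Applying the elementary inequality $2(x^2 + y^2) \ge (x+y)^2$ (a special case of Cauchy--Schwarz, or equivalently QM--AM) yields $(R - \bar a)^2 + (\bar a - r)^2 \ge \frac{1}{2}(R - r)^2$, and multiplying by $k$ closes the proof.

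There is no substantial obstacle. The argument reduces to the variance identity, the dropping of non-negative terms, and a single application of $2(x^2+y^2) \ge (x+y)^2$. The only bookkeeping point is to handle the degenerate case $R = r$ separately and to ensure that distinct indices $p \ne q$ are available whenever $R > r$, so that two genuinely different summands can be extracted from $\sum_j (a_j - \bar a)^2$. The hypothesis $a_j \ge 0$ is in fact not used in this argument, which suggests that the lemma holds for arbitrary real sequences.
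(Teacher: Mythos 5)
Your proof is correct, but there is nothing in the paper to compare it against: the authors do not prove Lemma \ref{l:Szokefalvi Nagy}, they simply quote it from Sz\"okefalvi Nagy's 1918 paper \cite{SN} (see also \cite{SGK}). Your self-contained argument is sound at every step: the identity $k\sum_j a_j^2 - \bigl(\sum_j a_j\bigr)^2 = k\sum_j (a_j-\bar a)^2$ is the standard variance expansion; discarding all summands except the two indexed by a maximizer $p$ and a minimizer $q$ is legitimate once $R>r$ guarantees $p\ne q$ (and the case $R=r$ is trivially handled); and since $\bar a\in[r,R]$, the nonnegative numbers $R-\bar a$ and $\bar a-r$ sum to $R-r$, so $2(x^2+y^2)\ge(x+y)^2$ delivers the bound $\tfrac12(R-r)^2$. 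Your closing observation that the hypothesis $a_j\ge 0$ is never used is also correct, and it dovetails nicely with the remark the authors place right after the lemma: the essential hypothesis is not the sign of the $a_j$ but the fact that $R$ and $r$ are the \emph{attained} maximum and minimum. That is exactly the point in your argument where you need indices $p\ne q$ with $a_p=R$ and $a_q=r$, and the paper's counterexample (constant sequence $a_j=a$ with $r\le a<R$) shows the inequality genuinely fails when $R$ and $r$ are merely upper and lower bounds. So your proof not only fills in the omitted argument but makes transparent why the authors' cautionary remark is necessary.
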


In many papers the hypothesis
$a_j \ge 0$ for $1\le j \le k$, $R= \max_j a_j$ and $r= \min_j a_j$,
is replaced by
$0 < r \le a_j \le R$ for $1\le j \le k$.
However, the conclusion of Lemma \ref{l:Szokefalvi Nagy} does not hold in general with the hypothesis $0 < r \le a_j \le R$ for $1\le j \le k$,
as the following example shows:

If $a_j = a$ for $1\le j \le k$, $R> a$ and $r\le a < R$, then
$$
k\sum_{j=1}^k a_j^2 - \Big( \sum_{j=1}^k a_j \Big)^2
= k^2 a^2 - k^2 a^2
= 0
< \frac{k}{2} \, (R - r)^2 .
$$

\begin{theorem} \label{t:Szokefalvi Nagy}
Let $G$ be a graph with $m$ edges,
$$
\Pi = \max_{uv \in E(G)} \frac{1}{\sqrt{d_u d_v}} \,,
\qquad and \qquad
\pi = \min_{uv \in E(G)} \frac{1}{\sqrt{d_u d_v}} \,.
$$
Then
$$
R(G)
\le \sqrt{m M_2^{-1}(G) - \frac{m}{2}\, (\Pi-\pi)^2 } \, ,
$$
and the equality is attained if $G$ is a regular or biregular graph.
\end{theorem}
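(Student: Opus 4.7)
The plan is to apply Lemma \ref{l:Szokefalvi Nagy} directly to the sequence of edge weights that defines $R(G)$. Index the edges of $G$ as $e_1,\dots,e_m$ and set
$$
a_j \;=\; \frac{1}{\sqrt{d_{u_j} d_{v_j}}} \qquad \text{for } e_j = u_jv_j \in E(G).
$$
With this choice we have $a_j \ge 0$ for every $j$, $\sum_{j=1}^m a_j = R(G)$, $\sum_{j=1}^m a_j^2 = M_2^{-1}(G)$, $\max_j a_j = \Pi$, and $\min_j a_j = \pi$. So all the ingredients needed by the lemma are built directly into the definitions, and there is no preliminary work to do before invoking it.

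Applying Lemma \ref{l:Szokefalvi Nagy} with $k=m$ then gives
$$
m\, M_2^{-1}(G) - R(G)^2 \;\ge\; \frac{m}{2}\,(\Pi-\pi)^2,
$$
which can be rearranged to $R(G)^2 \le m M_2^{-1}(G) - \tfrac{m}{2}(\Pi-\pi)^2$. Taking square roots (both sides are nonnegative, since the left side is and the right side bounds it from above) yields the desired inequality. This is the entire content of the argument; there is no real obstacle, as the theorem is essentially a translation of the Szőkefalvi Nagy inequality into graph-theoretic language.

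For the equality statement, I would verify the two cases separately. If $G$ is $k$-regular, then every edge contributes $1/\sqrt{k\cdot k}=1/k$, so $\Pi = \pi = 1/k$, hence $(\Pi-\pi)^2 = 0$; also $R(G) = m/k$ and $M_2^{-1}(G) = m/k^2$, so $\sqrt{m M_2^{-1}(G)} = m/k = R(G)$. If $G$ is $(\delta,\Delta)$-biregular, every edge joins vertices of degrees $\delta$ and $\Delta$, so every $a_j$ equals $1/\sqrt{\delta\Delta}$; again $\Pi = \pi$ and the same computation, with $k$ replaced by $\sqrt{\delta\Delta}$, gives equality. (Since the theorem claims equality only as a sufficient condition under these structural assumptions, no converse direction is needed.)
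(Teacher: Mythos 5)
Your proposal is correct and follows essentially the same route as the paper: both apply the Sz\"okefalvi Nagy inequality (Lemma \ref{l:Szokefalvi Nagy}) to the edge weights $a_j = 1/\sqrt{d_{u_j}d_{v_j}}$, identify $\sum a_j = R(G)$ and $\sum a_j^2 = M_2^{-1}(G)$, and verify equality for regular and biregular graphs by noting that all $a_j$ coincide with $1/\sqrt{\Delta\delta}$ in those cases. No gaps.
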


\begin{proof}
If we choose $a_j=1/\sqrt{d_u d_v}\,$,
Lemma \ref{l:Szokefalvi Nagy} gives
\begin{equation*}
\begin{split}
m M_2^{-1}(G) - R(G)^2
& = m\sum_{uv \in E(G)} \frac{1}{d_u d_v} - \left( \sum_{uv \in E(G)} \frac{1}{\sqrt{d_u d_v}} \,\right)^2
\\
& \ge \frac{m}{2} \, (\Pi-\pi)^2 ,
\end{split}
\end{equation*}
and this gives the inequality.

\smallskip

If $G$ is a  biregular or regular graph, then
$$
\frac{1}{\sqrt{d_u d_v}}
= \frac{1}{\sqrt{\D\d}}
= \Pi
= \pi
$$
for every $uv \in E(G)$.
Thus,
$$
\sqrt{m M_2^{-1}(G) -\frac{m}{2} \, (P-p)^2 }
= \sqrt{m \,\frac{m}{\D\d}}
= \frac{m}{\sqrt{\D\d}}
= R(G) .
$$
\end{proof}

The \emph{inverse degree index} $ID(G)$ is defined by
$$
ID(G) = \sum_{u\in V(G)} \frac1{d_u}
= \sum_{uv\in E(G)} \Big( \frac1{d_u^2} + \frac1{d_v^2}\Big)
= \sum_{uv\in E(G)}\frac{d_u^2 + d_v^2}{d_u^2 d_v^2}\,.
$$
The inverse degree index of a graph has been studied by several authors (see, e.g., \cite{DH,DXW,EP} and the references therein). The following result provides some inequalities relating Randi\'c and Inverse Degree indices
(see \cite{RSS} for other inequalities relating these indices).

\begin{theorem} \label{t:rid}
Let $G$ be a graph with minimum degree $\d$ and maximum degree $\D$.
Then
\begin{equation*}
\begin{aligned}
&\frac{\d}{2} \, ID(G)
\le R(G)
\le \frac{\D}{2} \, ID(G),
& \qquad \mbox{if} \ \ \d \ge s_0\D,
\\
&\frac{(\D \d)^{3/2}}{\D^2+\d^2} \, ID(G)
\le R(G)
\le \frac{\D}{2} \, ID(G),
& \qquad \mbox{if} \ \ \d \le s_0\D,
\end{aligned}
\end{equation*}
where $s_0$ is the unique solution of the equation $s^2-2\sqrt{s} +1=0$ in $(0,1)$.
Furthermore, the upper bound is attained if and only if $G$ is regular;
if $\d \ge s_0\D$, then the lower bound is attained if and only if $G$ is regular;
if $\d \le s_0\D$, then the lower bound is attained if and only if $G$ is biregular.
\end{theorem}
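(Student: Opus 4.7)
The plan is to mirror the structure of the proof of Theorem \ref{t:gam}, this time bounding the ratio between the summand of $R(G)$ and the summand of $ID(G)$ edge by edge. Since
\[
ID(G) = \sum_{uv \in E(G)} \frac{d_u^2 + d_v^2}{d_u^2 d_v^2},
\]
the natural comparison function is
\[
h(x,y) = \frac{1/\sqrt{xy}}{(x^2+y^2)/(x^2 y^2)} = \frac{(xy)^{3/2}}{x^2+y^2}
\]
on the square $[\d,\D]^2$. Once I show $\max h = \D/2$ and $\min h = \min\{\d/2,\,(\D\d)^{3/2}/(\D^2+\d^2)\}$, summing the pointwise inequalities $\min h \le h(d_u,d_v) \le \max h$ weighted by $(d_u^2+d_v^2)/(d_u^2 d_v^2)$ over $uv \in E(G)$ delivers both lines of the statement.

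To locate the extrema I compute
\[
\frac{\partial h}{\partial x}(x,y) = \frac{x^{1/2} y^{3/2}(3y^2 - x^2)}{2(x^2+y^2)^2}, \qquad \frac{\partial h}{\partial y}(x,y) = \frac{x^{3/2} y^{1/2}(3x^2 - y^2)}{2(x^2+y^2)^2}.
\]
A simultaneous vanishing forces $x^2=3y^2$ and $y^2=3x^2$, which has no positive solution, so the extrema of $h$ lie on the boundary of $[\d,\D]^2$. Along the sides $x=\D$ and $y=\D$ the relevant partial is positive ($3\D^2-y^2>0$ for $y\le \D$), so $h$ is monotone and attains the value $\D/2$ at the corner $(\D,\D)$. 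Along $x=\d$ (and symmetrically $y=\d$) the one-variable map $y \mapsto h(\d,y)$ has a critical point at $y=\sqrt{3}\,\d$ when it lies in $[\d,\D]$; an explicit estimate gives $h(\d,\sqrt{3}\,\d) = 3^{3/4}\d/4 \le 3^{1/4}\D/4 < \D/2$, so the global maximum remains $\D/2$. On these same sides the minimum is reached at an endpoint, leaving only two candidate minima, $h(\d,\d) = \d/2$ and $h(\d,\D) = (\D\d)^{3/2}/(\D^2+\d^2)$.

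The decisive step is to decide which candidate is smaller. Writing $s = \d/\D \in (0,1]$, the inequality $\d/2 \le (\D\d)^{3/2}/(\D^2+\d^2)$ rearranges to $s^2 - 2\sqrt{s} + 1 \le 0$. The function $c(s)=s^2-2\sqrt{s}+1$ satisfies $c(0)=1$, $c(1)=0$, and has a unique interior critical point at $s=2^{-2/3}$ where it attains a negative minimum, so there is a unique $s_0 \in (0,1)$ with $c(s_0)=0$ and $c(s) \le 0$ precisely on $[s_0,1]$. Hence $\min h = \d/2$ when $\d \ge s_0\D$ and $\min h = (\D\d)^{3/2}/(\D^2+\d^2)$ when $\d \le s_0\D$, exactly the dichotomy in the statement.

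For equality, the upper bound requires $h(d_u,d_v) = \D/2$ on every edge, forcing $d_u = d_v = \D$ and hence $G$ regular. For the lower bound, when $\d \ge s_0\D$ the minimum is attained only at $(\d,\d)$, so equality forces $d_u = d_v = \d$ on every edge and $G$ is regular; when $\d \le s_0\D$ the minimum is attained only at $(\d,\D)$ and $(\D,\d)$, so equality forces $\{d_u,d_v\} = \{\d,\D\}$ on every edge, i.e.\ $G$ is biregular. The main obstacle is the boundary optimization of $h$: one must verify that the boundary critical point $(\d,\sqrt{3}\,\d)$ never challenges $(\D,\D)$ for the global maximum, and cleanly package the comparison of the two corner minima into the single threshold equation $s^2 - 2\sqrt{s} + 1 = 0$ whose unique root in $(0,1)$ governs the two regimes.
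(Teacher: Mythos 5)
Your proof is correct and follows essentially the same route as the paper: an edge-by-edge comparison of the two summands via a two-variable optimization over $[\d,\D]^2$, with the threshold governed by the sign of $s^2-2\sqrt{s}+1$. The only cosmetic difference is that you optimize $h(x,y)=(xy)^{3/2}/(x^2+y^2)$ while the paper optimizes its reciprocal $f(x,y)=(x^{-2}+y^{-2})\sqrt{xy}$, so your maxima/minima correspond to the paper's minima/maxima; the critical-point analysis and the equality characterizations are the same.
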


\begin{proof}
First of all, let us check that $s_0$ is well-defined, i.e., there exists a unique solution of the equation $s^2-2\sqrt{s} +1=0$ in $(0,1)$.
By making the change of variable $s=t^2$, we see that this holds if and only if there exists a unique solution of the equation $t^4-2t +1=0$ in $(0,1)$.
Note that $t^4-2t +1=(t-1)u(t)$, with $u(t)=t^3+t^2+t-1$.
Since $u(0)=-1$, $u(1)=2$ and $u'(t)=3t^2+2t+1>0$ on $(0,1)$, we conclude that there is a unique zero $t_0$ of $u$ in $(0,1)$ and,
in fact,
$u(t)<0$ for every $t \in (0,t_0)$ and $u(t)>0$ for every $t \in (t_0,1)$.
If $s_0=t_0^2$, then $s^2-2\sqrt{s} +1>0$ for $s \in (0,s_0)$ and $s^2-2\sqrt{s} +1<0$ for every $s \in (s_0,1)$.

Let $f: [\d,\D] \times [\d,\D] \to \RR$ be the function
given by
$$
f(x,y)
= \Big( \frac1{x^2} + \frac1{y^2}\Big)\sqrt{xy}
= x^{-3/2} y^{1/2} + y^{-3/2} x^{1/2} .
$$
\indent

First we will find the minimum and maximum values of $f$. We can assume that $x \le y$(symmetry).
$$
\frac{\partial f}{\partial x}(x,y)
= -\frac{3}{2}\,x^{-5/2} y^{1/2} + \frac{1}{2}\,x^{-1/2} y^{-3/2}
= \frac{1}{2}\,x^{-5/2} y^{-3/2}\, ( x^2 - 3y^2 ).
$$
Thus,
$$
\frac{\partial f}{\partial x}(x,y) <0,
\qquad
\mbox{if} \ \ \d \le x \le y \le \D,
$$
and so, the function $f$ attains its maximum value in the set $\{x=\d,\, \d\le y \le \D \}$,
and the minimum value in the set $\{\d\le x=y \le \D \}$.
Thus,
\begin{equation*}
\begin{split}
f(x,y)
& \ge \min_{\d\le x \le \D} f(x,x)
= \min_{\d\le x \le \D} \frac2{x^2}\,x
= \frac2{\D}\, ,
\\
\frac1{d_u^2} + \frac1{d_v^2}
& \ge \frac2{\D} \,\frac1{\sqrt{d_ud_v}}\, ,
\\
R(G)
& \le \frac{\D}2 \, ID(G).
\end{split}
\end{equation*}
Since
$$
\frac{\partial f}{\partial y}(x,y)
= \frac{1}{2}\,y^{-5/2} x^{-3/2}\, ( y^2 - 3x^2 ),
$$
if $\D^2-3\d^2 < 0$, then
$$
\frac{\partial f}{\partial y}(\d,y)
= \frac{1}{2}\,y^{-5/2} \d^{-3/2}\, ( y^2 - 3\d^2 )
\le \frac{1}{2}\,y^{-5/2} \d^{-3/2}\, ( \D^2-3\d^2 ) < 0,
$$
and
$$
f(x,y)
\le \max_{\d\le y \le \D} f(\d,y)
= f(\d,\d)
= \frac2{\d}\,.
$$
If $\D^2-3\d^2 \ge 0$, then
$$
\frac{\partial f}{\partial y}(\d,y)
= \frac{1}{2}\,y^{-5/2} \d^{-3/2}\, ( y^2 - 3\d^2 ) \le 0
$$
if and only if $y \in [\d,\sqrt3 \, \d]$.
Thus, $f(\d,y)$ decreases on $[\d,\sqrt3 \, \d]$ and increases on $[\sqrt3 \, \d,\D]$.
Hence,
we have in both cases
$$
f(x,y)
 \le \max_{\d\le y \le \D} f(\d,y)
= \max \big\{ f(\d,\d), \, f(\d,\D)\big\}
= \max \Big\{ \frac2{\d}\,, \, \Big(\frac1{\d^2} + \frac1{\D^2}\Big)
\sqrt{\d\D}\, \Big\}
\, .
$$
Recall that $s^2-2\sqrt{s} +1>0$ on $(0,s_0)$.
Thus, we have for $\d \le s_0\D$,
$$
\Big(1+\frac{\d^2}{\D^2}\Big) \ge 2\sqrt{\frac{\d}{\D}}\,,
\qquad
\Big(\frac1{\d^2} + \frac1{\D^2}\Big)
\sqrt{\d\D} \ge \frac2{\d}\,,
$$
and we conclude
\begin{equation*}
\begin{split}
f(x,y)
& \le \max \Big\{ \frac2{\d}\,, \, \Big(\frac1{\d^2} + \frac1{\D^2}\Big)
\sqrt{\d\D}\, \Big\}
= \frac{\D^2+\d^2}{(\D \d)^{3/2}}
\, ,
\\
\frac1{d_u^2} + \frac1{d_v^2}
& \le \frac{\D^2+\d^2}{(\D \d)^{3/2}} \,\frac1{\sqrt{d_ud_v}}\, ,
\\
R(G)
& \ge \frac{(\D \d)^{3/2}}{\D^2+\d^2} \, ID(G).
\end{split}
\end{equation*}
If $\d \ge s_0\D$, then $f(x,y) \le f(\d,\d) = 2/\d$ and
$$
R(G)
\ge \frac{\d}{2} \, ID(G).
$$

The previous argument gives that the upper bound is attained if and only if $d_u=d_v=\D$ for every $uv \in E(G)$,
and this happens if and only if $G$ is regular.

Assume that $\d \ge s_0\D$.
Thus, the lower bound is attained if and only if $d_u=d_v=\d$ for every $uv \in E(G)$,
i.e., if and only if $G$ is regular.

Assume that $\d \le s_0\D$.
Thus, the lower bound is attained if and only if $\{d_u,d_v\} = \{\D,\d\}$ for every $uv \in E(G)$,
i.e., if and only if $G$ is biregular (note that $G$ can not be a regular graph since $\d \le s_0\D < \D$).
\end{proof}

\medskip

Theorem \ref{t:rid} has the following consequence on random graphs.

\begin{corollary} \label{c:rid}
In the Erd\"os-R\'{e}nyi model $G(n, p)$, with $p \in [1/2,1)$ and $q = 1 - p$, almost every graph $G$ satisfies
$$
\min \Big\{\frac{q}{2} \,, \, \frac{(pq)^{3/2}}{ p^2+q^2} \Big\} \, n
+ O\,(( n\log n)^{1/2})
\le \frac{R(G)}{ID(G)}
\le \frac{p}{2} \,n + O\,(( n\log n)^{1/2}) .
$$
\end{corollary}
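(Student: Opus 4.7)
The plan is to view Theorem \ref{t:rid} not as a case split but as a single chain of inequalities on the ratio $R(G)/ID(G)$, and then to substitute the almost-sure asymptotics for $\D$ and $\d$ supplied by items $(2)$ and $(3)$. Concretely, in both cases of Theorem \ref{t:rid} the upper bound reads $\D/2$, while the lower bound is either $\d/2$ or $(\D\d)^{3/2}/(\D^2+\d^2)$; hence in all cases one has the unified estimate
$$
\min\Big\{\frac{\d}{2}\,,\;\frac{(\D\d)^{3/2}}{\D^2+\d^2}\Big\}
\le \frac{R(G)}{ID(G)}
\le \frac{\D}{2}\,,
$$
so that no case analysis on whether $\d/\D\gtrless s_0$ is needed to conclude.

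Next I would expand each of the three extremal quantities in the Erd\"os-R\'enyi model using $\D=pn+O((n\log n)^{1/2})$ and $\d=qn+O((n\log n)^{1/2})$, exactly as is done in the proofs of Corollaries \ref{c:gam} and \ref{c:11}. For the upper bound this is immediate: $\D/2=(p/2)n+O((n\log n)^{1/2})$. For the two candidate lower bounds I would compute $\d/2=(q/2)n+O((n\log n)^{1/2})$ and, by the same algebraic manipulation used in the proof of Corollary \ref{c:gam} to handle a ratio whose numerator and denominator each carry a square-root/polynomial error,
$$
\frac{(\D\d)^{3/2}}{\D^2+\d^2}
=\frac{(pq)^{3/2}n^3+O\bigl(n^2(n\log n)^{1/2}\bigr)}{(p^2+q^2)n^2+O\bigl(n(n\log n)^{1/2}\bigr)}
=\frac{(pq)^{3/2}}{p^2+q^2}\,n+O\bigl((n\log n)^{1/2}\bigr).
$$
Taking the minimum of two quantities that share the same $O((n\log n)^{1/2})$ error term preserves that error order, yielding the stated lower bound. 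Combining the two sides gives the corollary.

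The only mildly delicate point is the handling of the $\min$: one must verify that $\min\{a_n+O(\varepsilon_n),\,b_n+O(\varepsilon_n)\}=\min\{a_n,b_n\}+O(\varepsilon_n)$, which follows from the elementary inequality $|\min\{x,y\}-\min\{x',y'\}|\le\max\{|x-x'|,|y-y'|\}$. After that, everything reduces to the arithmetic of Landau symbols already used repeatedly in this section, so no new ideas are required; the main obstacle, such as it is, is simply keeping the $o$ and $O$ terms bookkept consistently, and this is the same mechanical computation that handled the ratio $(\D+\d)^2/\sqrt{\D\d}$ in Corollary \ref{c:gam}.
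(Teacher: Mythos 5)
Your proposal is correct and follows essentially the same route as the paper: the paper likewise restates Theorem \ref{t:rid} in the unified form $\min\{\d/2,\,(\D\d)^{3/2}/(\D^2+\d^2)\}\le R(G)/ID(G)\le\D/2$ and then substitutes the asymptotics of items $(2)$ and $(3)$, arriving at exactly the expansion $(\D\d)^{3/2}/(\D^2+\d^2)=(pq)^{3/2}n/(p^2+q^2)+O((n\log n)^{1/2})$ that you compute. Your explicit remark on why the $\min$ of two quantities with the same error order preserves that order is a point the paper passes over silently, but it is a routine verification and does not change the argument.
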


\begin{proof}
Theorem \ref{t:rid} can be stated as follows:
$$
\min \Big\{\frac{\d}{2} \,, \, \frac{(\D \d)^{3/2}}{\D^2+\d^2} \, \Big\}
\le \frac{R(G)}{ID(G)}
\le \frac{\D}{2} \, .
$$
Items $(2)$ and $(3)$ give for almost every graph
\begin{equation*}
\begin{split}
\frac{(\D \d)^{3/2}}{\D^2+\d^2}
& = \frac{ \big(pqn^2 + O\,(n( n\log n)^{1/2}) \big)^{3/2}}{ (p^2+q^2)n^2 + O\,(n( n\log n)^{1/2}) }
= \frac{ (pq)^{3/2}n^3 \big(1 + \frac32\, \frac{O\,(n( n\log n)^{1/2})}{pqn^2} \big)}{ (p^2+q^2)n^2 + O\,(n( n\log n)^{1/2}) }
\\
& = \frac{ (pq)^{3/2}n^3}{ (p^2+q^2)n^2 + O\,(n( n\log n)^{1/2}) }
+ \frac{ O\,(n^2( n\log n)^{1/2})}{ (p^2+q^2)n^2 + O\,(n( n\log n)^{1/2}) }
\\
& = \frac{ (pq)^{3/2}n}{p^2+q^2} \Big( 1 - \frac{O\,(( n\log n)^{1/2})}{n} \Big)
+ O\,(( n\log n)^{1/2})
\\
& = \frac{(pq)^{3/2}}{ p^2+q^2} \,n
+ O\,(( n\log n)^{1/2}) .
\end{split}
\end{equation*}
These facts, and items $(2)$ and $(3)$ give for almost every graph
\begin{multline*}
\min \Big\{\frac{q}{2} \,n + O\,(( n\log n)^{1/2}) \,, \, \frac{(pq)^{3/2}}{ p^2+q^2} \, n
+ O\,(( n\log n)^{1/2}) \, \Big\}
\le \frac{H(G)}{ID(G)}\\
\le \frac{p}{2} \,n + O\,(( n\log n)^{1/2}) ,
\end{multline*}
and this finishes the proof.
\end{proof}

\section{Summary}

Based on the important theoretical-practical applications of the Randi\'c index, in this paper we have studied 
computationally and analytically the properties of the Randi\'c index $R(G)$ in Erd\"os-R\'{e}nyi graphs $G(n,p)$ 
characterized by $n$ vertices connected independently with probability $p \in (0,1)$.

First, by the proper scaling analysis of the average (and normalized) Randi\'c index, 
$\bra \overline{R}(G) \ket = \bra R(G)\ket/(n/2)$, we found that $\xi\approx np$ is the scaling parameter 
of $R(G(n,p))$; that is, for fixed $\xi$, $\bra \overline{R}(G) \ket$ is also fixed, see Fig.~\ref{Fig2}(b). 
Moreover, our analysis provides a way to predict the value of the Randi\'c index on Erd\"os-R\'{e}nyi graphs 
once the value of $\xi$ is known: 
$R(G)\approx 0$ for $\xi < 0.01$ (when the vertices in the graph are mostly isolated),
the transition from isolated vertices to complete graphs occurs in the interval $0.01 < \xi < 10$ where $0<R(G)< n/2$,
while when $\xi > 10$ the graphs are almost complete and $R(G)\approx n/2$. 
These intervals are indicated as vertical dashed lines in Fig.~\ref{Fig2}(b).
Also, to extend the applicability of our scaling analysis we demonstrate that for fixed $\xi$ the spectral properties 
of $R(G(n,p))$ (characterized by the energy of the corresponding Randi\'c matrix) are also universal; 
i.e., they do not depend on the specific values of the individual graph parameters, see Fig.~\ref{Fig3}(c). 

In particular, we would like to stress that here we have successfully introduced a scaling approach to the 
study of topological indexes. 

Then, to complement the study of the Randi\'c index we have explored the relations between $R(G)$ and 
other important topological indexes such as the (-2) sum-connectivity index, the misbalance rodeg index, the inverse
degree index, among others. In particular, we characterized graphs which are extremal with respect to those relations.


\section*{Acknowledgements}
C.T.M.-M. and J.A.M.-B. thank partial support by VIEP-BUAP (Grant No. MEBJ-EXC18-G), Fondo
Institucional PIFCA (Grant No. BUAP-CA-169), and CONACyT (Grant No. CB-2013/220624), Mexico.
J.M.R. and J.M.S. were supported in part by two grants from Ministerio de Econom{\'\i}a y Competitividad, Agencia Estatal de Investigaci{\'\o}n (AEI) and Fondo Europeo de Desarrollo Regional (FEDER) 
(MTM2016-78227-C2-1-P and MTM2017-90584-REDT), Spain.

\section*{References}


\begin{thebibliography}{00}

\bibitem{Wi} 
H. Wiener, 
Structural determination of paraffin boiling points,
J. Am. Chem. Soc. {\bf 69}, 17 (1947).

\bibitem{R} 
M. Randi\'c,
On characterization of molecular branching, 
J. Am. Chem. Soc. {\bf 97}, 6609 (1975).

\bibitem{GF} 
I. Gutman and B. Furtula (eds.), 
{\it Recent results in the theory of Randi\'c index} 
(Univ. Kragujevac, Kragujevac, 2008).

\bibitem{LG} 
X. Li and I. Gutman, 
{\it Mathematical aspects of Randi\'c type molecular structure descriptors}
(Univ. Kragujevac, Kragujevac, 2006).

\bibitem{LS} 
X. Li and Y. Shi,
A survey on the Randi\'c index,
MATCH Commun. Math. Comput. Chem. {\bf 59}, 127 (2008).


%

\bibitem{LZheng} 
X. Li and J. Zheng,
A unified approach to the extremal trees for different indices,
MATCH Commun. Math. Comput. Chem. {\bf 54}, 195 (2005).

\bibitem{LZhao} 
X. Li and H. Zhao,
Trees with the first smallest and largest generalized topological indices,
MATCH Commun. Math. Comput. Chem. {\bf 50}, 57 (2004).

\bibitem{MN} 
A. Mili\v{c}evi\'c and S. Nikoli\'c,
On variable Zagreb indices,
Croat. Chem. Acta {\bf 77}, 97 (2004).

\bibitem{R2} 
M. Randi\'c,
Novel graph theoretical approach to heteroatoms in quantitative structure-activity relationships,
Chemometrics Intel. Lab. Syst. {\bf 10}, 213 (1991).

\bibitem{R3} 
M. Randi\'c,
On computation of optimal parameters for multivariate analysis of structure-property relationship,
J. Chem. Inf. Comput. Sci. {\bf 31}, 970 (1991).

%

\bibitem{NMTJ} 
S. Nikoli\'c, A. Mili\v{c}evi\'c , N. Trinajsti\'c and A. Juri\'c,
On use of the variable Zagreb $^\nu M_2$ Index in QSPR: Boiling points of Benzenoid hydrocarbons,
Molecules {\bf 9}, 1208 (2004).

%
%

\bibitem{ZT2} 
B. Zhou and N. Trinajsti\'c,
On general sum-connectivity index,
J. Math. Chem. {\bf 47}, 210 (2010).

\bibitem{RS4} 
J. M. Rodr{\'\i}guez and J. M. Sigarreta,
New results on the Harmonic index and its generalizations,
MATCH Commun. Math. Comput. Chem. {\bf 78}, 387 (2017).

\bibitem{GFK18}
I. Gutman, B. Furtula and V. Katani\'c,
Randi\'c index and information,
AKCE Int. J. Graphs Comb. in press, (2018).

\bibitem{NJ03}
N. Nikolova and J. Jaworska,
Approaches to measure chemical similarity-a review,
QSAR Comb. Sci. {\bf 22}, 1006 (2003).

\bibitem{R15}
M. Randi\'c,
On the history of the connectivity index: from the connectivity index
to the exact solution of the protein alignment problem,
SAR QSAR Environ. Res. {\bf 26}, 523 (2015).

\bibitem{E10}
E. Estrada,
Quantifying network heterogeneity,
Phys Rev. E {\bf 82}, 066102 (2010).

\bibitem{MMR17}
P. de Meo, F. Messina, D. Rosaci, G. M. L. Sarn\'e and A. V. Vasilakos,
Estimating graph robustness through the Randi\'c index,
IEEE Trans. Cybern. {\bf 99}, 1 (2017).

\bibitem{CDE15}
Z. Chen, M. Dehmer, F. Emmert-Streib and Y. Shi,
Entropy of weighted graphs with Randi\'c weights,
Entropy {\bf 17}, 3710 (2015).


\bibitem{Costa011:AP}
L. da F. Costa, O. N. Oliveira Jr, G. Travieso, F. A. Rodrigues, P. R. Villas Boas,
L. Antiqueira, M. P. Viana, and L. E. C. Rocha,
Analyzing and modeling real-world phenomena with complex networks: a survey of applications,
Advances in Physics {\bf 60}, 329 (2011).

\bibitem{B13}
A. L. Barabasi,
Network science,
Phil. Trans. R. Soc. A {\bf 371}, 20120375 (2013).

\bibitem{Mulken2011}
O. M{\"u}lken and A. Blumen,
Continuous-time quantum walks: Models for coherent transport on complex networks,
Phys. Rep. {\bf 502}, 37 (2011).

\bibitem{N10}
M. E. J. Newman,
{\it Networks: An introduction}
(Oxford University Press, New York, 2010).

\bibitem{Boccaletti06}
S. Boccaletti, V. Latora, Y. Moreno, M. Chavez, and D.-U. Hwang,
Complex networks: Structure and dynamics,
Phys. Rep. {\bf 424}, 175 (2006).

\bibitem{Bollobas98}
B. Bollob{\'a}s,
{\it Random Graphs} in Modern Graph Theory,
Graduate Texts in Mathematics Volume 184, pp 215-252
(Springer, New York, 1998).

\bibitem{SR51}
R. Solomonoff and A. Rapoport,
Connectivity of random nets,
Bull. Math. Biophys. {\bf 13}, 107 (1951).

\bibitem{ER59}
P. Erd\H{o}s and A. R\'enyi,
On Random Graphs,
Publ. Math. (Debrecen) {\bf 6}, 290 (1959).

\bibitem{ER60}
P. Erd\H{o}s and A. R\'enyi,
On the evolution of random graphs,
Inst. of the Hung. Acad. of Sci. {\bf 5}, 17 (1960);
On the strength of connectedness of a random graph,
Acta Mathematica Hungarica {\bf 12}, 261 (1961).


\bibitem{MAMRP15}
J. A. Mendez-Bermudez, A. Alcazar-Lopez , A. J. Martinez-Mendoza, F. A. Rodrigues and T. K. DM. Peron,
Universality in the spectral and eigenfunction properties of random networks,
Phys. Rev. E {\bf 91}, 032122 (2015).

\bibitem{MAM13}
A. J. Martinez-Mendoza, A. Alcazar-Lopez and J. A Mendez-Bermudez,
Scattering and transport properties of tight-binding random networks,
Phys. Rev. E {\bf 88}, 122126 (2013).

\bibitem{GAC18}
R. Gera, L. Alonso, B. Crawford, J. House, J. A. Mendez-Bermudez, T. Knuth and R. Miller,
Identifying network structure similarity using spectral graph theory,
Appl. Net. Sci. {\bf 3}, 2 (2018).
   
\bibitem{MM19}
C. T. Martinez-Martinez and J. A. Mendez-Bermudez,
Information entropy of tight-binding random networks with losses and gain: Scaling and universality,
Entropy {\bf 21}, 86 (2019).

\bibitem{TFM19}
G. Torres-Vargas, R. Fossion, and J. A. Mendez-Bermudez
Normal mode analysis of spectra of random networks,
{\it submitted} (2019).

\bibitem{RS05}
J. A. Rodriguez and J. M Sigarreta,
On the Randi\'c index and conditional parameters of a graph,
MATCH Commun. Math. Comput. Chem. {\bf 54}, 403 (2005).

\bibitem{BGGC10}
S. B. Bozkurt, A. D. G\"ung\"or, I. Gutman and A. S. Cevik,
Randi\'c matrix and Randi\'c energy,
MATCH Commun. Math. Comput. Chem. {\bf 64}, 239 (2010).

\bibitem{MF91}
A. D. Mirlin and Y. V. Fyodorov,
Universality of level correlation function of sparse random matrices,
J. Phys. A: Math. Gen. {\bf 24}, 2273 (1991).

\bibitem{FM91b}
Y. V. Fyodorov and A. D. Mirlin,
Localization in ensemble of sparse random matrices,
Phys. Rev. Lett. {\bf 67}, 2049 (1991).

\bibitem{EE92}
S. N. Evangelou and E. N. Economou,
Spectral density singularities, level statistics, and localization in a sparse random matrix ensemble,
Phys. Rev. Lett. {\bf 68}, 361 (1992).

\bibitem{E92}
S. N. Evangelou,
A numerical study of sparse random matrices,
J. Stat. Phys. {\bf 69}, 361 (1992).

\bibitem{Bollobas} 
B. Bollob\'as,
Degree sequences of random graphs,
Discrete Math. {\bf 33}, 1 (1981).

\bibitem{Dalfo} 
C. Dalf\'o,
On the Randi\'c index of graphs,
Discrete Math. in press, (2018).

\bibitem{RS2} 
J. M. Rodriguez and J. M. Sigarreta, 
On the Geometric-Arithmetic index,
MATCH Commun. Math. Comput. Chem. {\bf 74}, 103 (2015).

\bibitem{DBG} 
K. C. Das, S. Balachandran and I. Gutman,
Inverse degree, Randi\'c index and harmonic index of graphs,
Appl. Anal. Discrete Math. {\bf 11}, 304 (2017).

\bibitem{V1} 
D. Vuki\v{c}evi\'c and M. Ga\v{s}perov,
Bond additive modeling 1. Adriatic indices,
Croat. Chem. Acta {\bf 83}, 243 (2010).

\bibitem{SN} 
J. S. Nagy,
Uber algebraische Gleichungen mit lauter reellenWurzeln,
Jahresbericht der Deutschen mathematiker-Vereingung {\bf 27}, 37 (1918).

\bibitem{SGK} 
R. Sharma, M. Gupta and G. Kopor,
Some better bounds on the variance with applications,
J. Math. Ineq. {\bf 4}, 355 (2010).


\bibitem{DH} 
P. Dankelmann, A. Hellwig and L. Volkmann,
Inverse degree and edge-connectivity, 
Discrete Math. {\bf 309}, 2943 (2008).

\bibitem{DXW} 
K.C. Das , K. Xu  and J. Wang,
On inverse degree and topological indices of graphs, 
Filomat {\bf 30}, 2111 (2016).

\bibitem{EP} 
P. Erd\"os, J. Pach and J. Spencer,
On the mean distance between points of a graph,
Congr. Numer. {\bf 64}, 121 (1988).

\bibitem{RSS} 
J. M. Rodr{\'\i}guez , J. L. S\'anchez and J. M. Sigarreta,
Inequalities on the inverse degree index,
{\it submitted} (2018).


\end{thebibliography}
\end{document}